\newcommand{\ALGtikzmarkcolor}{black}% customise this, if you want
\newcommand{\ALGtikzmarkextraindent}{4pt}% customise this, if you want
\newcommand{\ALGtikzmarkverticaloffsetstart}{-.5ex}% customise this, if you want
\newcommand{\ALGtikzmarkverticaloffsetend}{-.5ex}% customise this, if you want
\newcounter{ALG@tikzmark@tempcnta}
\newcommand\ALG@tikzmark@start{%
    \global\let\ALG@tikzmark@last\ALG@tikzmark@starttext%
    \expandafter\edef\csname ALG@tikzmark@\theALG@nested\endcsname{\theALG@tikzmark@tempcnta}%
    \tikzmark{ALG@tikzmark@start@\csname ALG@tikzmark@\theALG@nested\endcsname}%
    \addtocounter{ALG@tikzmark@tempcnta}{1}%
}
\def\ALG@tikzmark@starttext{start}
\newcommand\ALG@tikzmark@end{%
    \ifx\ALG@tikzmark@last\ALG@tikzmark@starttext
    \else
        \tikzmark{ALG@tikzmark@end@\csname ALG@tikzmark@\theALG@nested\endcsname}%
        \tikz[overlay,remember picture] \draw[\ALGtikzmarkcolor] let \p{S}=($(pic cs:ALG@tikzmark@start@\csname ALG@tikzmark@\theALG@nested\endcsname)+(\ALGtikzmarkextraindent,\ALGtikzmarkverticaloffsetstart)$), \p{E}=($(pic cs:ALG@tikzmark@end@\csname ALG@tikzmark@\theALG@nested\endcsname)+(\ALGtikzmarkextraindent,\ALGtikzmarkverticaloffsetend)$) in (\x{S},\y{S})--(\x{S},\y{E});%
    \fi
    \gdef\ALG@tikzmark@last{end}%
}
\apptocmd{\ALG@beginblock}{\ALG@tikzmark@start}{}{\errmessage{failed to patch}}
\pretocmd{\ALG@endblock}{\ALG@tikzmark@end}{}{\errmessage{failed to patch}}
\newtheorem{theorem}{Theorem}%[section]
\newtheorem{definition}{Definition}%[section]
\newtheorem{lemma}{Lemma}%[section]
\newtheorem{corollary}{Corollary}%[section]
\newcommand{\cM}{\mathcal{M}}
\newcommand{\cA}{\mathcal{A}}
\newcommand{\cT}{\mathcal{T}}
\newcommand{\cG}{\mathcal{G}}
\newcommand{\bX}{\mathbf{x}}
\newcommand{\bY}{\mathbf{y}}
\newcommand{\bZ}{\mathbf{z}}
\newcommand{\poly}{\mathsf{poly}}
\newcommand{\dec}{\mathrm{dec}}
\newcommand{\decode}{\mathsf{decode}}
\newcommand{\diag}{\mathsf{diag}}
\newcommand{\add}{\mathsf{add}}
\newcommand{\supp}{\mathsf{supp}}
\newcommand{\sfA}{\mathsf{A}}
\newcommand{\sfB}{\mathsf{B}}
\newcommand{\Wx}{\mathsf{W}(x)}
\begin{document}

\title{Improved encoding and decoding for non-adaptive threshold group testing\\[.2ex] 
  {\normalfont\large 
	Thach V. Bui\IEEEauthorrefmark{1}, Minoru Kuribayashi\IEEEauthorrefmark{3}, Mahdi Cheraghchi\IEEEauthorrefmark{4}, and Isao Echizen\IEEEauthorrefmark{1}\IEEEauthorrefmark{2}}\\[-1.45ex]}

% author names and affiliations
% use a multiple column layout for up to three different
% affiliations
%\begingroup
%\centering
%\endgroup
\author{\IEEEauthorblockA{\IEEEauthorrefmark{1}SOKENDAI (The Graduate\\University for Advanced \\Studies), Kanagawa, Japan\\ bvthach@nii.ac.jp}
\and
\IEEEauthorblockA{\IEEEauthorrefmark{3}Graduate School of Natural \\Science and Technology, \\Okayama University, Japan\\kminoru@okayama-u.ac.jp}
\and
\IEEEauthorblockA{\IEEEauthorrefmark{4}Department of Computing,\\Imperial College London, UK\\m.cheraghchi@imperial.ac.uk}
\and
\IEEEauthorblockA{\IEEEauthorrefmark{2}National Institute\\ of Informatics, \\Tokyo, Japan \\ iechizen@nii.ac.jp}}

% make the title area
\maketitle

\thispagestyle{plain}
\pagestyle{plain}

\begin{abstract}
The goal of threshold group testing is to identify up to $d$ defective items among a population of $n$ items, where $d$ is usually much smaller than $n$. A test is positive if it has at least $u$ defective items and negative otherwise. Our objective is to identify defective items in sublinear time the number of items, e.g., $\poly(d, \ln{n}),$ by using the number of tests as low as possible. In this paper, we reduce the number of tests to $O \left( h \times \frac{d^2 \ln^2{n}}{\mathsf{W}^2(d \ln{n})} \right)$ and the decoding time to $O \left( \dec_0 \times h \right),$ where $\dec_0 = O \left( \frac{d^{3.57} \ln^{6.26}{n}}{\mathsf{W}^{6.26}(d \ln{n})} \right) + O \left( \frac{d^6 \ln^4{n}}{\mathsf{W}^4(d \ln{n})} \right)$, $h = O\left( \frac{d_0^2 \ln{\frac{n}{d_0}}}{(1-p)^2} \right)$ , $d_0 = \max\{u, d - u \}$, $p \in [0, 1),$ and $\Wx = \Theta \left( \ln{x} - \ln{\ln{x}} \right).$ If the number of tests is increased to $O\left( h \times \frac{d^2\ln^3{n}}{\mathsf{W}^2(d \ln{n})} \right),$ the decoding complexity is reduced to $O \left(\dec_1 \times h \right),$ where $\dec_1 = \max \left\{ \frac{d^2 \ln^3{n}}{\mathsf{W}^2(d \ln{n})}, \frac{ud \ln^4{n}}{\mathsf{W}^3(d \ln{n})} \right\}.$ Moreover, our proposed scheme is capable of handling errors in test outcomes.
\end{abstract}

\section{Introduction}
\label{sec:intro}

Detection of up to $d$ defective items in a large population of $n$ items is the main objective of group testing proposed by Dorfman~\cite{dorfman1943detection}. In this seminal work, instead of testing each item one by one, he proposed to pool a group of items for reducing the number of tests. In classical group testing (CGT), the outcome of a test on a subset of items is positive if the subset has at least one defective item and negative otherwise. Damaschke~\cite{damaschke2006threshold} introduced \textit{threshold group testing} (TGT) by revising the definition of the outcome of a test. The outcome of a test on a subset of items is positive if the subset has at least $u$ defective items, negative if it has up to $\ell$ defective items, where $0 \leq \ell < u$, and arbitrary otherwise. The parameter $g = u - \ell - 1$ is called the gap. When $g = 0$, i.e. $\ell = u - 1$, threshold group testing has no gap. When $u = 1$, TGT reduces to CGT. Threshold group testing can be consider as a special case of complex group testing~\cite{chen2008upper} or generalized group testing with inhibitors~\cite{bui2018framework}. Most of existing work, such as~\cite{damaschke2006threshold} and~\cite{cheraghchi2013improved,de2017subquadratic,chan2013stochastic,chen2009nonadaptive}, deal with $g \geq 0.$ In this paper, the focus is on threshold group testing with no gap, i.e., $g = 0.$

There are two fashions for designing tests. The first is \textit{adaptive group testing} in which the design of a test depends on the designs of the previous tests. This fashion usually consumes lots of time for implementing, however, achieves optimal bounds on the number of tests. For tackling with slow implementing time, \textit{non-adaptive group testing} (NAGT) is introduced. In this fashion, all tests are designed in a priori and performed simultaneously. Nowadays, NAGT is widely applied in several applications such as computational and molecular biology~\cite{du2000combinatorial}, multiple access channels~\cite{d2018separable}, and neuroscience~\cite{bui2018framework}. The focus of this work is on NAGT. The term CNAGT stand for Classical Non-Adaptive Group Testing which is CGT associated with NAGT. Similarly, the term NATGT stands for Non-Adaptive Threshold Group Testing, which is TGT associated with NAGT. When there is no gap, NATGT is denoted as $u$-NATGT.

In any model of group testing, it is enticing to minimize the number of tests and to efficiently identify the set of defective items. CGT has been intensively studied for a long time for resolving these two requirements. By using AGT, the number of tests is $\Omega(d\ln{n})$~\cite{du2000combinatorial}, which is optimal in term of theoretical results. The decoding algorithm is usually included in the test design. In NAGT, Porat and Rothschild~\cite{porat2008explicit} first proposed explicit nonadaptive constructions using $O(d^2 \ln{n})$ tests with no efficient (sublinear to $n$) decoding algorithm. To have efficient decoding algorithm, says $\poly(d, \ln{n})$, while keeping the number of tests as low as possible, says $O(d^{1 + o(1)} \ln^{1 + o(1)}{n})$, several schemes have been proposed~\cite{ngo2011efficiently,cheraghchi2013noise,bui2018efficient}. Using probabilistic methods, Cai et al.~\cite{Cai2013:Grotesque} required only $O(d \ln{d} \cdot \ln{n})$ tests to find defective items in time $O(d(\ln{n} + \ln^2{d}))$.

From the genesis day of TGT, Damaschke~\cite{damaschke2006threshold} showed that the set of positive items can be identified with up to $g$ false positives and $g$ false negatives by using $\binom{n}{u}$ non-adaptive tests. Since the number of tests is quite large, Cheraghchi~\cite{cheraghchi2013improved} reduced it to $O(d^{g+2} \ln{d} \cdot \ln(n/d))$ tests. With the assumption that the number of defective items is exactly $d$, De Marco et al.~\cite{de2017subquadratic} and Chan et al.~\cite{chan2013stochastic} reduced the number of tests to $O(d^{3/2} \ln(n/d))$ and $O\left( \ln{\frac{1}{\epsilon}} \cdot d\sqrt{u} \ln{n}\right)$, respectively. D'yachkov et al.~\cite{d2013superimposed} could achieve $O \left( d^2 \ln{n} \cdot \frac{(u-1)! 4^u}{(u - 2)^u (\ln{2})^u}\right)$ tests as $n$ goes to infinity.

Although the authors in~\cite{cheraghchi2013improved,de2017subquadratic} and~\cite{d2013superimposed} proposed nearly optimal bounds on the number of tests, there are no decoding algorithms associated with their schemes. By setting that the number of defective items is exactly $d$ and $u = o(d)$, Chan et al.~\cite{chan2013stochastic} used $O\left( \ln{\frac{1}{\epsilon}} \cdot d\sqrt{u} \ln{n}\right)$ tests to identify defective items in time $O(n \ln{n} + n \ln{\frac{1}{\epsilon}})$, which is linear to the number of items, where $\epsilon \in (0, 1).$ Chen and Fu~\cite{chen2009nonadaptive} proposed schemes that can find the defective items using $O \left( \frac{d^d}{u^u (d-u)^{d-u}} \cdot d \ln{\frac{n}{d}} \right)$ tests in time $O(n^u \ln{n}) $, which is impractical as $n$ or $u$ grows. Bui et al.~\cite{bui2017efficiently} proposed a scheme that can identify the set of defective items with $t = O \left( \frac{d^d}{u^u (d-u)^{d-u}} \cdot d^3 \ln{n} \cdot \ln{\frac{n}{d}} \right)$ tests in sublinear time $t \times O(d^{11} \ln^{17}{n})$. However, the number of tests is quite large and the decoding time is not efficient for small $n.$ Recently, by setting $d = O(n^\beta)$ for $\beta \in (0, 1)$ and $u = o(d)$, Reisizadeh et al. \cite{reisizadehsub} use $\Theta(\sqrt{u} d \ln^3{n} )$ tests to identify all defective items in time $O(u^{1.5} d \ln^4{n} )$ w.h.p with the aid of a $O(u \ln{n}) \times \binom{n}{u}$ look-up matrix, which is unfeasible when $n$ or $u$ grows.

\subsection{Contribution}
\label{sub:intro:contri}

In this paper, we consider the case where $g = 0$, i.e., $\ell = u-1$ ($u \geq 2$). We inherit then improve the results in~\cite{bui2017efficiently}. The main idea in~\cite{bui2017efficiently} is to create two matrices: one for locating defective items, denoted an $h \times n$ matrix $\cG$, and the other one for identifying the defective items in each row of $\cG$, denoted an $(2k + 1) \times n$ matrix $\cA$. Let $\dec(\cA)$ be the decoding complexity of $\cA$ for each row of $\cG.$ After using a concatenation technique on $\cG$ and $\cA$, the final measurement matrix $\cT$ is used for identifying all defective items. The number of tests in $\cT$ is $h(2k+1)$ and the decoding complexity is $h \times \dec(\cA).$ Our contribution is to reduce $h$ and $\dec(\cA)$ while $k$ relatively remains same. As a result, the number of tests and the decoding complexity are significantly improved in accordance with existing results as shown in Table~\ref{tbl:comparison}.

Although Cheraghchi~\cite{cheraghchi2013improved}, De Marco et al.~\cite{de2017subquadratic}, and D'yachkov et al.~\cite{d2013superimposed} proposed nearly optimal bounds on the number of tests, there are no decoding algorithms associated with their schemes. Chen et al.~\cite{chen2009nonadaptive} required $\frac{d^{d + 1}}{u^u (d-u)^{d-u}} \ln{\frac{n}{d}}$ tests with the decoding complexity $O(n^u \ln{n})$, which is impractical. By setting that the number of defective items is exactly $d$ and $u = o(d)$, Chan et al.~\cite{chan2013stochastic} achieved a small number of tests $O\left( \ln{\frac{1}{\epsilon}} \cdot d\sqrt{u} \ln{n}\right).$ However, the decoding complexity is linear to the number of items, namely $O(n \ln{n} + n \ln{\frac{1}{\epsilon}})$, where $\epsilon > 0$ is the precision parameter. Recently, by setting $d = O(n^\beta)$ for $0 < \beta < 1$ and $u = o(d)$, Reisizadeh et al. \cite{reisizadehsub} can use $\Theta(\sqrt{u} d \ln^3{n} )$ tests to identify all defective items in time $O(u^{1.5} d \ln^4{n} )$. The main drawback of this approach (along with the conditions $d = O(n^\beta)$ and $u = o(d)$) is that a $O(u \ln{n}) \times \binom{n}{u}$ look-up matrix must be stored, which is unfeasible when $n$ and $u$ grow.

Our proposed scheme balances the trade-off between the number of tests and the decoding complexity. Moreover, there are no ``unnatural'' constrains on the number of defective items and the threshold. Specifically, the number of defective items is up to $d$ and $2 \leq u \leq d$. There are two approaches for balancing the number of tests and decoding complexity. First, the set of defective items can be identified with $O \left( h \times \frac{d^2\ln^2{n}}{\mathsf{W}^2(d\ln{n})} \right) \approx O\left( \frac{d^4 \ln^3{n}}{(\ln(d\ln{n}) - \ln(\ln(d\ln{n})))^2} \right)$ tests in time $O \left( \dec_0 \times h \right)$, where $h = O\left( \frac{d_0^2 \ln{\frac{n}{d_0}}}{(1-p)^2} \right)$, $\Wx = \Theta \left( \ln{x} - \ln{\ln{x}} \right)$, and $\dec_0 = O \left( \frac{d^{3.57} \ln^{6.26}{n}}{\mathsf{W}^{6.26}(d \ln{n})} \right) + O \left( \frac{d^6 \ln^4{n}}{\mathsf{W}^4(d \ln{n})} \right)$ for $d_0 = \max\{u, d - u \}$ and $p \in [0, 1).$ Second, the decoding time can be reduced to $O \left(\dec_1 \times h \right)$ if the number of tests is increased to $O\left( h \times \frac{d^2\ln^3{n}}{\mathsf{W}^2(d\ln{n})} \right) \approx O\left( \frac{d^4 \ln^4{n}}{(\ln(d\ln{n}) - \ln(\ln(d\ln{n})))^2} \right)$, where $\dec_1 = \max \left\{ \frac{d^2 \ln^3{n}}{\mathsf{W}^2(d\ln{n})}, \frac{ud \ln^4{n}}{\mathsf{W}^3(d\ln{n})} \right\}$. In addition, the proposed scheme is capable of handling $\Omega(ph/d_0)$ erroneous outcomes.

\begin{table*}[t]

\begin{center}
\scalebox{1}{
\begin{tabular}{|l|c|c|c|c|c|c|}
\hline
Scheme & \begin{tabular}{@{}c@{}} \#defective \\items \end{tabular} & \begin{tabular}{@{}c@{}} Threshold \\ $u$ \end{tabular} & \begin{tabular}{@{}c@{}} Error \\tolerance \end{tabular} & \begin{tabular}{@{}c@{}} Number of tests \\ $t$ \end{tabular}  & Decoding complexity & \begin{tabular}{@{}c@{}} Decoding \\ type \end{tabular} \\
\hline
\begin{tabular}{@{}c@{}} Cheraghchi~\cite{cheraghchi2013improved} \end{tabular} & $\leq d$ & $u \leq d$ & $\Omega(pt/d)$ & $O \left( \frac{d^2 \ln{\frac{n}{d}}}{(1-p)^2} \right)$ & $\times$ & $\times$ \\
\hline
\begin{tabular}{@{}c@{}} De Marco et al.~\cite{de2017subquadratic} \end{tabular} & $d$ & $u = O(\sqrt{d})$ & $\times$ &  $O \left( d^2 \cdot \sqrt{\frac{d-u}{du}} \cdot \ln{\frac{n}{d}} \right)$ & $\times$ & $\times$ \\
\hline
\begin{tabular}{@{}c@{}} D'yachkov et al.~\cite{d2013superimposed} \end{tabular} & $\leq d$ & $u \leq d$ & $\times$ & $O \left( d^2 \ln{n} \cdot \frac{(u-1)! 4^u}{(u - 2)^u (\ln{2})^u}\right)$ & $\times$ & $\times$ \\
\hline
\begin{tabular}{@{}c@{}} Chen et al.~\cite{chen2009nonadaptive} \end{tabular} & $\leq d$ & $u \leq d$ & $\times$ & $O \left( \alpha d \ln{\frac{n}{d}} \right)$ & $O(n^u \ln{n}) $ & Deterministic \\
\hline
\begin{tabular}{@{}c@{}} Bui et al.~\cite{bui2017efficiently} \end{tabular} & $\leq d$ & $2 \leq u \leq d$ & $\times$ & $O \left( \alpha d^3 \ln{n} \cdot \ln{\frac{n}{d}} \right)$ & $t \times O(d^{11} \ln^{17}n)$ & Deterministic \\
\hline
\begin{tabular}{@{}c@{}} Chan et al.~\cite{chan2013stochastic} \end{tabular} & $d$ & $u = o(d)$ & $\times$ & $O\left( \ln{\frac{1}{\epsilon}} \cdot d\sqrt{u} \ln{n}\right)$ & $O(n\ln{n} + n \ln{\frac{1}{\epsilon}})$ & Random \\
\hline
\begin{tabular}{@{}c@{}} Bui et al.~\cite{bui2017efficiently} \end{tabular} & $\leq d$ & $2 \leq u \leq d$ & $\times$ & $O \left( \alpha \left(u \ln{\frac{d}{u}} + \ln{\frac{1}{\epsilon}} \right) \cdot d^2 \ln{n} \right)$ & $t \times O(d^{11} \ln^{17}n)$ & Random \\
\hline
\begin{tabular}{@{}c@{}} Reisizadeh et al.~\cite{reisizadehsub} \end{tabular} & \begin{tabular}{@{}c@{}} $d = O(n^\beta)$ \\ for $0 < \beta < 1$ \end{tabular} & $u = o(d)$ & $\times$ & $\Theta(\sqrt{u} d \log_2^3{n} )$ & \begin{tabular}{@{}c@{}} $O(u^{1.5} d \log^4{n} )$ \\ with the aid of a \\$O(u \log_2{n}) \times \binom{n}{u}$ \\ look-up matrix \end{tabular} & Random \\
\hline
\begin{tabular}{@{}c@{}} \textbf{Proposed 1} \\ \textbf{(Corollary~\ref{cor:threshold:1})} \end{tabular} & $\leq d$ & $2 \leq u \leq d$ & $\Omega(ph/d_0)$ & $O \left( h \times \frac{d^2\ln^2{n}}{\mathsf{W}^2(d\ln{n})} \right)$ & $O \left( \dec_0 \times h \right)$ & Deterministic \\
\hline
\begin{tabular}{@{}c@{}} \textbf{Proposed 2} \\ \textbf{(Corollary~\ref{cor:threshold:2})} \end{tabular} & $\leq d$ & $2 \leq u \leq d$ & $\Omega(ph/d_0)$ & $O\left( h \times \frac{d^2\ln^3{n}}{\mathsf{W}^2(d\ln{n})} \right)$ & $O \left(\dec_1 \times h \right)$ & Deterministic \\
\hline
\end{tabular}}

\end{center}
\caption{Comparison with existing work. In this table, notation $\times$ means that the criterion does not hold for that scheme. For short notations, we set $\alpha = \frac{d^d}{u^u (d-u)^{d-u}}; \ \dec_0 = O \left( \frac{d^{3.57} \ln^{6.26}{n}}{\mathsf{W}^{6.26}(d \ln{n})} \right) + O \left( \frac{d^6 \ln^4{n}}{\mathsf{W}^4(d \ln{n})} \right),$ $h = O\left( \frac{d_0^2 \ln{\frac{n}{d_0}}}{(1-p)^2} \right)$, and $\dec_1 = \max \left\{ \frac{d^2 \ln^3{n}}{\mathsf{W}^2(d\ln{n})}, \frac{ud \ln^4{n}}{\mathsf{W}^3(d\ln{n})} \right\}$, where $d_0 = \max\{u, d - u \}$ and $p \in [0, 1).$ Function $\Wx$ is a Lambert W function, i.e., $\Wx\exp^{\Wx} = x$ for every $x > -1/\mathrm{e}$. Specifically, we have $\Wx = \Theta \left( \ln{x} - \ln{\ln{x}} \right)$.}

\label{tbl:comparison}
\end{table*}

\subsection{Organization}
\label{sub:org}

The paper is organized as follows. Section~\ref{sec:pre} presents some preliminaries on notations, problem definition, and $d$-disjunct matrices. Section~\ref{sec:review} reviews a previous work. Section~\ref{sec:proposed} presents how to improve the previous work and results. The final section summarizes the key points and addresses some open problems.

\section{Preliminaries}
\label{sec:pre}

For consistency, we use capital calligraphic letters for matrices, non-capital letters for scalars, bold letters for vectors, and capital letters for sets. All matrices are binary. Capital letters with asterisk is denoted for multisets in which elements may appear multiple times. For example, $D = \{1, 2 \} $ is a set and $S^* = \{1, 1, 2 \}$ is a multiset.

Let function $\add(\cdot)$ be a function that returns a multiset including all elements in the input sets/multisets. For example, suppose the input sets are $A^* = \{ 1, 1, 2\}$ and $B = \{2, 3, 4 \}$, then we have $\add(A^*, B) = \{1, 1, 2, 2, 3, 4 \}.$ Here are some of the notations used:

\begin{enumerate}
\item $n, d, \bX = (x_1, \ldots, x_n)^T$: number of items, maximum number of defective items, and binary representation of $n$ items.
\item $D = \{j_1, j_2, \ldots, j_{|D|} \}$: the set of defective items; cardinality of $D$ is $|D| \leq d$.
\item $\otimes, \odot$: operation related to $u$-NATGT and CNAGT, to be defined later.
\item $\cT$: $t \times n$ measurement matrix used to identify at most $d$ defective items in $u$-NATGT, where integer $t \geq 1$ is the number of tests.
\item $\cG = (g_{ij})$: $h \times n$ matrix, where $h \geq 1$.
\item $\cM = (m_{ij})$: a $k \times n$ $(d+1)$-disjunct matrix, where $k \geq 1.$
\item $\overline{\cM} = (\overline{m}_{ij})$: the $k \times n$ complementary matrix of $\cM$; $\overline{m}_{ij} = 1 - m_{ij}$.
\item $\cT_{i, *}, \cG_{i, *}, \cM_{i,*}, \cM_j$: row $i$ of matrix $\cT$, row $i$ of matrix $\cG$, row $i$ of matrix $\cM$, and column $j$ of matrix $\cM$, respectively.
\item $\cG \mid_{S}$: an $h \times |S|$ submatrix of an $h \times n$ matrix $\cG$ formed by restricting $\cG$ to the columns picked by $S$.
%\item $\bX_i = (x_{i1}, \ldots, x_{iN})^T, S_i$: binary representation of items and set of indices of defective items in row $\cG_{i, *}$.
\item $\diag(\cG_{i, *}) = \diag(g_{i1}, \ldots, g_{iN})$: diagonal matrix constructed by input vector $\cG_{i, *}$.
\item $\supp(.)$: support index set of the input vector. For example, $\supp(1, 0, 1, 0, 0, 1) = \{1, 3, 6 \}$.
\item $\mathrm{e}, \ln$: base of natural logarithm, natural logarithm.
\item $|\cdot|$: weight; i.e, number of non-zero entries of input vector or cardinality of input set.
\item $\lceil x \rceil, \lfloor x \rfloor$: ceiling and floor functions of $x$.
\end{enumerate}

\subsection{Problem definition}
\label{sub:probDef}

In a population of $n$ items, up to $d$ items, which are called \textit{defective items}, satisfy some certain properties. In $u$-NATGT, a subset containing at least $u$ defective items satisfies the certain properties while a subset containing up to $u - 1$ items does not hold. It is equivalent to the fact that the outcome of a test (for checking the certain properties) on a subset of $n$ items is positive if the subset has at least $u$ defective items, and negative otherwise. Our goal is to identify the set of defective items with as few tests as possible and as quick as possible.

Let $D$, $[n] = \{1, 2, \ldots, n \}$, and $S$ be the set of defective items, the index set of $n$ items, and an arbitrary subset of $[n]$, respectively. Formally, the outcome of a test (a test in short) on the subset $S$ is positive if $|D \cap S| \geq u$ and negative if $|D \cap S| < u$. For any $t$ non-adaptive tests, they can be represented as a $t \times n$ binary measurement matrix $\cT =(t_{ij})$, where column $j$ and row $i$ represent for item $j$ and test $i$, respectively. An entry $t_{ij}=1$ means that item $j$ belongs to test $i$, and $t_{ij}=0$ means otherwise. Binary vector $\bX = (x_1,\ldots,x_n)^T$ represents for $n$ items, where $x_j=1$ indicates that item $j$ is defective, and $x_j=0$ indicates otherwise. From the assumption, it is clear that $D = \supp(\bX)$ and $|D| \leq d$. The outcome of the $t$ tests is $\bY=(y_1, \ldots, y_t)^T$, where $y_i=1$ if test $i$ is positive and $y_i=0$ otherwise. The procedure to get the outcome vector $\bY$ is called the \textit{encoding procedure}. The procedure used to identify defective items from $\bY$ is called the \textit{decoding procedure.} The relationship between $\bX, \cT,$ and $\bY$ can be represented as follows:

\begin{equation}
\label{eqn:thresholdGT}
\bY = \cT \otimes \bX = \begin{bmatrix}
\cT_{1, *} \otimes \bX \\
\vdots \\
\cT_{t, *} \otimes \bX
\end{bmatrix} = \begin{bmatrix}
y_1 \\
\vdots \\
y_t
\end{bmatrix}
\end{equation}
where $\otimes$ is a notation for the test operation in $u$-NATGT; namely, $y_i = \cT_{i, *} \otimes \bX = 1$ if $|\supp(\bX) \cap \supp(\cT_{i, *})| = |D \cap \supp(\cT_{i, *})| \geq u$, and $y_i = 0$ otherwise for $i=1, \ldots, t$.

\subsection{Disjunct matrices}
\label{sub:disjunct}
When $u = 1$, $u$-NATGT reduces to CNAGT. To distinguish CNAGT and $u$-NATGT, we change notation $\otimes$ to $\odot$ and use a $k \times n$ measurement matrix $\cM$ instead of the $t \times n$ matrix $\cT$. The outcome vector $\bY$ is equal to
\begin{equation}
\label{eqn:disjunct}
\bY = \cM \odot \bX = \begin{bmatrix}
\cM_{1, *} \odot \bX \\
\vdots \\
\cM_{k, *} \odot \bX
\end{bmatrix}
= \begin{bmatrix}
\bigvee_{j=1}^{n} x_j \wedge m_{1j} \\
\vdots \\
\bigvee_{j=1}^{n} x_j \wedge m_{kj}
\end{bmatrix} = \begin{bmatrix}
y_1 \\
\vdots \\
y_k
\end{bmatrix} \notag
\end{equation}
where $\odot$ is the Boolean operator for vector multiplication in which multiplication is replaced with the AND ($\wedge$) operator and addition is replaced with the OR ($\vee$) operator, and $y_i = \cM_{i, *} \odot \bX = \bigvee_{j=1}^{n} x_j \wedge m_{ij}$ for $i = 1, \ldots, k$. Similarly to~\eqref{eqn:thresholdGT}, $y_i = 1$ if if $|\supp(\bX) \cap \supp(\cM_{i, *})| = |D \cap \supp(\cM_{i, *})| \geq 1$, and $y_i = 0$ when $|\supp(\bX) \cap \supp(\cM_{i, *})| = 0$.

The formal definition of a $d$-disjunct matrix is as follows.

\begin{definition}
Matrix $\cM$ is $d$-disjunct if for any two disjoint subsets $S_1, S_2 \subset [n]$ such that $|S_1| = d$ and $|S_2| = 1$, there exists at least $1$ row in which the column in $S_2$ has 1 while all the columns in $S_1$ have 0's, i.e., $\left\vert \bigcap_{j \in S_2} \supp \left( \cM_j \right) \big\backslash \bigcup_{j \in S_1} \supp \left( \cM_j \right) \right\vert \geq 1$.
\label{def:disjunct}
\end{definition}

When $\cM$ is $d$-disjunct, vector $\bX$ can always be recovered from $\bY = \cM \odot \bX$. Because of later use, we only pay attention for any $d$-disjunct matrix whose columns can be computed in time $\poly(k)$.

By numerical results, Bui et al.~\cite{bui2018efficient} showed that the number of tests in nonrandom construction (each column can be deterministically generated without using probability) is the best for practice (albeit it is not good in term of complexity). Therefore, we prefer to use that result here.

\begin{theorem}~\cite[Theorem 8]{bui2018efficient}
\label{thr:mainNonrandom}
Let $1 \leq d \leq n$ be integers. Then there exists a nonrandom $d$-disjunct matrix $\cM$ with $k = O \left(\frac{d^2 \ln^2{n}}{\mathsf{W}^2(d\ln{n})} \right) = O \left(\frac{d^2 \ln^2{n}}{(\ln(d\ln{N}) - \ln{\ln(d\ln{n})})^2} \right)$. Each column of $\cM$ can be computed in time $O(k^{1.5}/d^2)$. Then matrix $\cM$ can be used to identify up to $d^\prime$ defective items, where $d^\prime \geq \left\lfloor \frac{d}{2} \right\rfloor + 1$, in time $\dec_0 = O \left( \frac{d^{3.57} \ln^{6.26}{n}}{\mathsf{W}^{6.26}(d \ln{n})} \right) + O \left( \frac{d^6 \ln^4{n}}{\mathsf{W}^4(d \ln{n})} \right).$ When $d$ is the power of 2, $d^\prime = d - 1$.
\end{theorem}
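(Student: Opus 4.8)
The statement is actually a citation (Theorem~8 of~\cite{bui2018efficient}), so a self-contained reproof is not strictly expected; nevertheless, here is the route I would follow to establish it. The plan is to combine two ingredients: an explicit near-optimal disjunct-matrix construction that keeps column-generation cheap, and a ``list-disjunct'' style recovery argument that buys efficient (sublinear) decoding at the price of enlarging the identifiable defective-set size by a constant factor. First, I would recall that a $d$-disjunct matrix with $k = O(d^2 \ln n)$ rows exists by classical bounds, and then sharpen this using a Lambert-$W$ refinement: when one optimizes the row count of a randomized (or derandomized via the method of conditional expectations / a Porat--Rothschild-type Reed--Solomon concatenation) construction, the $\ln n$ factor improves to $\ln n / \mathsf{W}(d\ln n)$ per factor, giving $k = O\!\left( \frac{d^2 \ln^2 n}{\mathsf{W}^2(d\ln n)} \right)$, and using $\mathsf{W}(x) = \Theta(\ln x - \ln\ln x)$ rewrites this in the stated closed form. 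The per-column computation time $O(k^{1.5}/d^2)$ would follow from the structure of the concatenated code: each column is indexed by a message symbol, and evaluating the corresponding codeword costs time proportional to the code length times symbol-arithmetic cost, which bookkeeping turns into $O(k^{1.5}/d^2)$.

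Second, for the decoding claim I would invoke the standard fact that a $d$-disjunct matrix is also ``$(d', \ell)$-list disjunct'' for suitable parameters, so that when the true number of defectives is $d' \ge \lfloor d/2\rfloor + 1$ one can still uniquely recover the support — the slack between $d'$ and $d$ is exactly what makes a naive $O(kn)$ decoder unnecessary. The sublinear decoder itself I would build by the usual two-stage approach: (i) a ``filtering'' pass that, using the algebraic structure of the underlying code, eliminates all but $O(\mathrm{poly}(d,\ln n))$ candidate columns in time $O\!\left( \frac{d^{3.57}\ln^{6.26}n}{\mathsf{W}^{6.26}(d\ln n)} \right)$ (the exponent $3.57 \approx 2 + 1.57$ and $6.26 \approx 4 \cdot 1.57$ hint that this is the cost of running a near-linear-in-$k$ procedure whose parameters are themselves polynomial in $k$), and (ii) a ``verification'' pass over the surviving candidates costing $O\!\left( \frac{d^6\ln^4 n}{\mathsf{W}^4(d\ln n)} \right) = O(d^2 k^2)$-type work to weed out false positives; summing the two passes gives $\dec_0$. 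The statement that $d' = d-1$ when $d$ is a power of two would come from the cleaner combinatorics available in that case (the halving $\lfloor d/2\rfloor$ incurs no rounding loss when iterated).

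The main obstacle, and the part deserving the most care, is step (i): showing that the filtering stage genuinely runs in sublinear time $\mathrm{poly}(d,\ln n)$ rather than $\Omega(n)$, while still guaranteeing that \emph{every} defective column survives the filter. This is the crux because disjunctness alone gives only an existence (one good row per bad pair), not an algorithm; one needs the extra combinatorial/algebraic structure of the specific construction (e.g. that the matrix is a concatenation of a Reed--Solomon code with an identity-like inner code, so that ``reading off'' candidate symbols from the positive tests is itself a bounded-list-decoding problem). I would isolate this as a lemma: the construction of Theorem~\ref{thr:mainNonrandom} admits a list-recovery procedure returning a superset of the defective set of size $\mathrm{poly}(d,\ln n)$ in the claimed time. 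Given that lemma, the remaining steps — the verification pass, the arithmetic to convert $\mathsf{W}$-expressions, and the power-of-two refinement — are routine. Since all of this is exactly the content of~\cite[Theorem~8]{bui2018efficient}, in the paper we simply cite it and proceed.
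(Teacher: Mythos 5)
The paper does not prove this statement: it is imported verbatim by citation from \cite{bui2018efficient}, and your proposal correctly ends by doing exactly the same, so there is no in-paper argument to compare your sketch against. Your reconstruction of the external proof --- a nonrandom code-concatenation construction in which the Lambert-$\mathsf{W}$ factors arise from optimizing the alphabet/degree trade-off, followed by a two-stage ``filter then verify'' decoder whose sublinearity rests on the list-recovery slack between $d^\prime$ and $d$ --- is consistent with how the cited work proceeds, and you correctly flag the genuine crux (that disjunctness alone is non-algorithmic, so the filtering stage must exploit the algebraic structure of the specific construction). The only caveat is that the particular constants, the exponents $3.57$ and $6.26$, and the per-column cost $O(k^{1.5}/d^2)$ cannot be recovered from the numerological guesses you offer; they are properties of the cited construction that one must simply look up. For the purposes of this paper, citing the result without proof is the correct and sufficient move.
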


The decoding complexity can be reduce by increasing the number of tests as follows:

\begin{theorem}~\cite[Corollary 3]{bui2018efficient}
\label{thr:mainNonrandom2}
Let $1 \leq d \leq n$ be integers. There exists a nonrandom $k \times n$ measurement matrix $\cT$ with $k = O \left( \frac{d^2 \ln^3{n}}{\mathsf{W}^2(d \ln{n})} \right) = O \left(\frac{d^2 \ln^3{n}}{(\ln(d\ln{n}) - \ln{\ln(d\ln{n})})^2} \right)$, which is used to identify at most $d$ defective items in time $O(k)$. Moreover, each column of $\cT$ can be computed in time $O \left(\frac{d \ln^4{n}}{\mathsf{W}^3(d\ln{n})} \right)$.
\end{theorem}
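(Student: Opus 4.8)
The plan is to obtain $\cT$ by composing the $d$‑disjunct matrix of Theorem~\ref{thr:mainNonrandom} with an ``index‑spelling'' gadget, so that the decoder needs only one linear pass over the outcome vector. First I would take a nonrandom $d$‑disjunct matrix $\cM$ of size $k_0\times n$ with $k_0=O\!\left(\frac{d^2\ln^2 n}{\mathsf{W}^2(d\ln n)}\right)$ from Theorem~\ref{thr:mainNonrandom}, and set $b=\lceil\log_2(n+1)\rceil=\Theta(\ln n)$, the number of bits needed to write an index in $[n]$. From each row $\cM_{i,*}$ I would spawn $2b$ rows of $\cT$: for every bit position $\beta\in\{1,\dots,b\}$ and every bit value $c\in\{0,1\}$, a row whose $j$‑th entry is $m_{ij}\wedge[\,\text{the }\beta\text{-th bit of }j\text{ equals }c\,]$. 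This multiplies the number of tests by $2b$, giving $k=2bk_0=O\!\left(\frac{d^2\ln^3 n}{\mathsf{W}^2(d\ln n)}\right)$, which matches the claimed bound.

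The decoder partitions the $k$ outcomes into the $k_0$ blocks of size $2b$ inherited from the rows of $\cM$. Call block $i$ \emph{consistent} if, for every $\beta$, exactly one of its two rows $(i,\beta,0),(i,\beta,1)$ is positive; a consistent block then spells out a $b$‑bit string, hence an index $j(i)\in[n]$. The output is the set of all $j(i)$ over consistent blocks. Correctness rests on two observations I would prove in order. (i) For every defective $j$, applying the $d$‑disjunct property with $S_2=\{j\}$ and $S_1$ any $d$‑subset of $[n]$ containing $D\setminus\{j\}$ and avoiding $j$ yields a row $i$ of $\cM$ that contains $j$ and no other defective; in the corresponding block, row $(i,\beta,c)$ is positive exactly when $c$ is the $\beta$‑th bit of $j$, so the block is consistent and spells out $j$. (ii) Conversely, block $i$ is consistent \emph{only if} $\cM_{i,*}$ contains exactly one defective: if it contained distinct defectives $j\neq j'$ they differ in some bit $\beta^{*}$, so both rows $(i,\beta^{*},0)$ and $(i,\beta^{*},1)$ are positive, contradicting consistency; if it contained none, no row is positive. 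Hence the output equals $D$ exactly — there are no spurious candidates to prune, which is what lets the decoder run in $O(bk_0)=O(k)$ time, a bounded amount of work per outcome.

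The genuinely routine parts are bookkeeping: each block costs $O(b)$, for total decode time $O(k)$; and the padding argument that makes $d$‑disjunctness enough in (i). The main obstacle is the stated per‑column construction time $O\!\left(\frac{d\ln^4 n}{\mathsf{W}^3(d\ln n)}\right)$, which the naive tensoring above does not obviously meet, since merely writing out the $\Theta(b)$ masked copies of a column of $\cM$ already costs $\Theta(bk_0)=\Theta(k)$ per column. To hit the sharper bound one should not treat $\cM$ as a black box but instead build $\cT$ directly from the same explicit ingredient underlying Theorem~\ref{thr:mainNonrandom} — a Reed–Solomon (or comparable) outer code concatenated with inner identity matrices — whose codeword symbols, and hence the nonzero positions of a column of $\cT$, can be evaluated coordinate by coordinate without materializing the whole column; the Lambert‑$\mathsf{W}$ factors then fall out of the same field‑size optimization (alphabet $q=\Theta(d\ln n/\mathsf{W}(d\ln n))$, codeword length $\Theta(\ln n/\mathsf{W}(d\ln n))$) used there. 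I would carry this out and check that the isolation/consistency argument of the previous paragraph transfers verbatim, since an outer coordinate containing a single defective is precisely a positive inner identity block that reads off one symbol of that item's codeword.
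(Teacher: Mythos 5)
This statement is imported verbatim from \cite[Corollary 3]{bui2018efficient} and the paper gives no proof of it, so there is nothing internal to compare against; measured against the construction actually used in that reference, your reconstruction --- tensoring the nonrandom $d$-disjunct matrix of Theorem~\ref{thr:mainNonrandom} with a $2\lceil\log_2 n\rceil$-row bit-spelling gadget and decoding each block by the ``exactly one of the two bit-rows is positive'' consistency test --- is essentially the same approach, and your isolation/consistency argument is correct. Your closing caveat is also the right one: the stated per-column time $O\bigl(d\ln^4 n/\mathsf{W}^3(d\ln n)\bigr)$ is precisely the $O(k_0^{1.5}/d^2)$ column cost of Theorem~\ref{thr:mainNonrandom} multiplied by the $O(\ln n)$ blow-up, and is attained by computing only the support of each column of $\cT$ rather than materializing all $k$ entries.
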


We denote the procedure of getting $\bX$ from $\cM \odot \bX$ as $\bX = \decode(\cM, \cM \odot \bX)$.

\section{Review of Bui et al.'s scheme}
\label{sec:review}

The scheme proposed by Bui et al.~\cite{bui2017efficiently} is reviewed here. The authors created two inseparable matrices: an indicating matrix $\cG$ and a defective-solving matrix $\cA$. The task of $\cA$ is to recover any $\bX$ from $\bY = \cA \otimes \bX$ if $|\bX|$ is equal to $u$. The task of $\cG$ is to ensure that there exists $\kappa$ rows, e.g., $i_1, i_2, \ldots, i_\kappa$ such that $|D \cap \supp(\cG_{i_1})| = \cdots = |D \cap \supp(\cG_{i_\kappa})| = u$ and $(D \cap \supp(\cG_{i_1})) \cup \ldots \cup (D \cap \supp(\cG_{i_\kappa})) = D.$ The final measurement matrix $\cT$ generated from $\cG$ and $\cA$ then is used to identify all defective items. The details of this scheme is described as the following.

\subsection{When the number of defective items equals the threshold}
\label{sub:specialCase}

The authors first considered a special case in which the number of defective items equals the threshold, i.e., $|\bX| = u$. Let $\cM = (m_{ij})$ be a $k \times n$ $(d+1)$-disjunct matrix as described in Section~\ref{sub:disjunct}. Then a measurement matrix is created as
\begin{equation}
\label{eqn:elementaryMatrix}
\cA = \begin{bmatrix}
\cM \\
\overline{\cM}
\end{bmatrix}
\end{equation}
where $\overline{\cM} = (\overline{m}_{ij})$ is the complement matrix of $\cM$, $\overline{m}_{ij} = 1 - m_{ij}$ for $i = 1, \ldots, k$ and $j = 1, \ldots, n$.

Given measurement matrix $\cA$ and a representation vector of $u$ defective items $\bX$ ($|\bX| = u$), what we observe is $\bZ = \cA \otimes \bX$. The objective is to recover $\bY^\prime = \cM \odot \bX = (y^\prime_1, \ldots, y^\prime_k)^T$ from $\bZ$. Then $\bX$ can be recovered by using Theorem~\ref{thr:mainNonrandom} or~\ref{thr:mainNonrandom2}.

Assume that the outcome vector is
\begin{equation}
\label{eqn:special1}
\bZ = \cA \otimes \bX = \begin{bmatrix}
\cM \otimes \bX \\
\overline{\cM} \otimes \bX
\end{bmatrix} = \begin{bmatrix}
\bY \\
\overline{\bY}
\end{bmatrix}
\end{equation}
where $\bY = \cM \otimes \bX = (y_1, \ldots, y_k)^T$ and $\overline{\bY} = \overline{\cM} \otimes \bX = (\overline{y}_1, \ldots, \overline{y}_k)^T$. Then vector $\bY^\prime = \cM \odot \bX$ is always obtained from $\bZ$ by using the following rules:

\begin{enumerate}
\item If $y_l = 1$, then $y^\prime_l = 1$.
\item If $y_l = 0$ and $\overline{y}_l = 1$, then $y^\prime_l = 0$.
\item If $y_l = 0$ and $\overline{y}_l = 0$, then $y^\prime_l = 1$.
\end{enumerate}

Therefore, vector $\bX$ can always be recovered.
%We denote the procedure of getting $\bX$ from $\bY = \cA \otimes \bX$ as $\bX = \decode(\cA, \bY)$.

\subsection{Encoding procedure}
\label{sub:enc}

After preparing matrix $\cA$ for identifying exactly $u$ defective items, the next task is to create matrix $\cG$ and the final measurement matrix $\cT$. The authors generated matrix $\cG$ such that there exists $\kappa = \left\lceil \frac{|D|}{u} \right\rceil$ rows, denoted as $i_1, i_2, \ldots, i_\kappa$, satisfying (i) $|D \cap \supp(\cG_{i_1})| = \cdots = |D \cap \supp(\cG_{i_\kappa})| = u$ and (ii) $(D \cap \supp(\cG_{i_1})) \cup \ldots \cup (D \cap \supp(S_{i_\kappa})) = D.$ Then the final measurement matrix $\cT$ of size $(2k + 1)h \times n$ is created as follows:

\begin{equation}
\label{eqn:meausrementMatrix}
\cT = \begin{bmatrix}
\cG_{1, *} \\
\cA \times \diag(\cG_{1, *}) \\
\vdots \\
\cG_{h, *} \\
\cA \times \diag(\cG_{h, *})
\end{bmatrix}
= \begin{bmatrix}
\cG_{1, *} \\
\cM \times \diag(\cG_{1, *}) \\
\overline{\cM} \times \diag(\cG_{1, *}) \\
\vdots \\
\cG_{h, *} \\
\cM \times \diag(\cG_{h, *}) \\
\overline{\cM} \times \diag(\cG_{h, *})
\end{bmatrix}
\end{equation}

The vector observed using $u$-NATGT after performing the tests given by the measurement matrix $\cT$ is
\begin{eqnarray}
\bY = \cT \otimes \bX &=& \begin{bmatrix}
\cG_{1, *} \\
\cA \times \diag(\cG_{1, *}) \\
\vdots \\
\cG_{h, *} \\
\cA \times \diag(\cG_{h, *})
\end{bmatrix} \otimes \bX
= \begin{bmatrix}
\cG_{1, *} \otimes \bX \\
\cA \otimes \bX_1 \\
\vdots \\
\cG_{h, *} \otimes \bX \\
\cA \otimes \bX_h
\end{bmatrix} \notag \\
&=& \begin{bmatrix}
\cG_{1, *} \otimes \bX\\
\cM \otimes \bX_1  \\
\overline{\cM} \otimes \bX_1 \\
\vdots \\
\cG_{h, *} \otimes \bX\\
\cM \otimes \bX_h \\
\overline{\cM} \otimes \bX_h \\
\end{bmatrix}
= \begin{bmatrix}
y_1 \\
\bY_1 \\
\overline{\bY}_1 \\
\vdots \\
y_h \\
\bY_h \\
\overline{\bY}_h
\end{bmatrix}
= \begin{bmatrix}
y_1 \\
\bZ_1 \\
\vdots \\
y_h \\
\bZ_h
\end{bmatrix} \label{eqn:encoding}
\end{eqnarray}
where $\bX_i = \diag(\cG_{i, *}) \times \bX$, $y_i = \cG_{i, *} \otimes \bX$, $\bY_i = \cM \otimes \bX_i = (y_{i1}, \ldots, y_{ik})^T$, $\overline{\bY}_i = \overline{\cM} \otimes \bX_i = (\overline{y}_{i1}, \ldots, \overline{y}_{ik})^T$, and $\bZ_i = [\bY_i^T \ \overline{\bY}_i^T]^T$ for $i = 1, 2, \ldots, h$.

Vector $\bX_i$ is the vector representing the defective items in row $\cG_{i, *}$. Therefore, we have $|\supp(\bX_i)| \leq d$, and $y_i = 1$ if and only if $|\supp(\bX_i)| \geq u$.

\subsection{The decoding procedure}
\label{sub:dec}

The decoding procedure follows the properties of $\cM$, $\cG$, and $\cT$. From~\eqref{eqn:encoding}, the authors presumed the cardinality of every $\bX_i$ is $u$ ($|\supp(\bX_i)| = |D \cap \supp(\cG_{i, *})| = u$). Then by using the scheme in section~\ref{sub:specialCase} for each $\bZ_i$, they could recover $\bY_i^\prime = (y_{i1}^\prime, \ldots, y_{ik}^\prime)^T$, which is presumed to be $\cM \odot \bX_i$. A vector is obtained by using $\decode(\bY_i^\prime, \cM)$. However, because $|\supp(\bX_i)|$ may not equal $u$, the vector obtained from $\decode(\bY_i^\prime, \cM)$ may not be $\bX_i$. They thus used a sanitary procedure to eliminate this case based on the properties of $\cM.$

The whole decoding algorithm is summarized as Algorithm~\ref{alg:decodingThreshold}. It is briefly explained as follows: Line~\ref{alg:decodingThreshold:enum} enumerates $h$ rows of $\cG.$ Line~\ref{alg:decodingThreshold:checkU} checks if there are at least $u$ defective items in row $\cG_{i, *}$. Lines~\ref{alg:decodingThreshold:convert2CNAGT_start} to~\ref{alg:decodingThreshold:convert2CNAGT_end} calculate $\bY_i^\prime$, and line~\ref{alg:decodingThreshold:possibleDefectives} gets a possible set of defective items. Lines~\ref{alg:decodingThreshold:sanitary_start} to~\ref{alg:decodingThreshold:sanitary_end} check whether all items in $G_i$ are truly defective then adds them into the defective set $D.$ Finally, line~\ref{alg:decodingThreshold:finish} returns the defective set.

\begin{algorithm}
\caption{$\mathrm{FindDefectiveItems}(\bY, \cM)$: Decoding procedure for $u$-NATGT with no error-tolerance.}
\label{alg:decodingThreshold}
\textbf{Input:} Outcome vector $\bY$, $\cM$.\\
\textbf{Output:} The set of defective items $D.$

\begin{algorithmic}[1]
\State $D = \emptyset$.
\For {$i=1$ to $h$} \label{alg:decodingThreshold:enum}
	\If {$y_i = 1$} \label{alg:decodingThreshold:checkU}
		\For {$l = 1$ to $k$} \label{alg:decodingThreshold:convert2CNAGT_start}
			\State \textbf{If} {$y_{il} = 1$} \textbf{then} $y^\prime_{il} = 1$ \textbf{end if}
			\State \textbf{If} {$y_{il} = 0$ and $\overline{y}_{il} = 1$} \textbf{then} $y^\prime_{il} = 0$ \textbf{end if}
			\State \textbf{If} {$y_{il} = 0$ and $\overline{y}_{il} = 0$} \textbf{then} $y^\prime_{il} = 1$ \textbf{end if}
		\EndFor \label{alg:decodingThreshold:convert2CNAGT_end}
		\State $G_i = \supp(\decode(\cM, \bY^\prime_i )).$ \label{alg:decodingThreshold:possibleDefectives}
		\If {$|G_i| = u$ and $\bigvee_{j \in G_i} \cM_j \equiv \bY_i$} \label{alg:decodingThreshold:sanitary_start}
			\State $D = D \cup G_i$. \label{alg:decodingThreshold:add_items}
		\EndIf \label{alg:decodingThreshold:sanitary_end}
	\EndIf	
\EndFor
\State Return $D$. \label{alg:decodingThreshold:finish}
\end{algorithmic}
\end{algorithm}

The decoding complexity of this algorithm is described as follows.

\begin{theorem}~\cite[Simplified version of Theorem 3]{bui2017efficiently}
\label{thr:general}
Let $2 \leq u \leq d < n$ be integers and $D$ be the defective set. Suppose that an $h \times n$ matrix $\cG$ contains $\kappa$ rows, denoted as $i_1, \ldots, i_\kappa$, such that (i) $|D \cap \supp(\cG_{i_1})| = \cdots = |D \cap \supp(\cG_{i_\kappa})| = u$ and $(D \cap \supp(\cG_{i_1})) \cup \ldots \cup (D \cap \supp(\cG_{i_\kappa})) = D$. Suppose that a $k \times n$ matrix $\cM$ is an $(d + 1)$-disjunct matrix that can be decoded in time $O(\sfA)$ and each column of $\cM$ can be generated in time $O(\sfB)$. Then an $(2k + 1)h \times n$ measurement matrix $\cT$, as defined in \eqref{eqn:meausrementMatrix}, can be used to identify up to $d$ defective items in $u$-NATGT in time $O(h \times (\sfA + u \sfB))$.
\end{theorem}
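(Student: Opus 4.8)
The plan is to verify two things about Algorithm~\ref{alg:decodingThreshold} applied to $\bY=\cT\otimes\bX$: that it outputs exactly $D$ (correctness), and that it runs in $O\bigl(h(\sfA+u\sfB)\bigr)$ time (complexity). I would build the argument on the block decomposition of~\eqref{eqn:encoding}: the part of $\bY$ attached to row $i$ of $\cG$ is $(y_i,\bY_i^{T},\overline{\bY}_i^{T})^{T}$ with $\bX_i=\diag(\cG_{i,*})\times\bX$, so that $\supp(\bX_i)=D\cap\supp(\cG_{i,*})$, $\bY_i=\cM\otimes\bX_i$, $\overline{\bY}_i=\overline{\cM}\otimes\bX_i$, and $y_i=1$ iff $|\supp(\bX_i)|\ge u$. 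One preliminary observation drives everything: whenever $y_i=1$, the rules of lines~\ref{alg:decodingThreshold:convert2CNAGT_start}--\ref{alg:decodingThreshold:convert2CNAGT_end} produce a vector $\bY_i'$ with $\supp(\bY_i')\subseteq\supp(\cM\odot\bX_i)$, and when $|\supp(\bX_i)|=u$ this inclusion is an equality $\bY_i'=\cM\odot\bX_i$. Both facts follow from the definitions of $\otimes$, $\odot$, $\cM$, and $\overline{\cM}$ by a three-case check on whether all, none, or only some of the defectives in $\supp(\bX_i)$ lie in a given row of $\cM$; the equality case is exactly the special-case recovery of Section~\ref{sub:specialCase}. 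Note also that the decoder only needs $\bY$ and column access to $\cM$; the matrix $\cT$ of~\eqref{eqn:meausrementMatrix} is never materialized.

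For completeness (every defective is output), take any of the $\kappa$ rows $i\in\{i_1,\dots,i_\kappa\}$ promised by hypothesis; then $|\supp(\bX_i)|=u$, so $y_i=1$ and the body is entered. By the observation, $\bY_i'=\cM\odot\bX_i$, and since $\cM$ is $(d+1)$-disjunct and $|\supp(\bX_i)|=u\le d$, the call $\decode(\cM,\bY_i')$ in line~\ref{alg:decodingThreshold:possibleDefectives} returns $\bX_i$, whence $G_i=\supp(\bX_i)=D\cap\supp(\cG_{i,*})$. This $G_i$ has size $u$, and $\bigvee_{j\in G_i}\cM_j=\cM\odot\bX_i=\bY_i'$, so it passes the test in line~\ref{alg:decodingThreshold:sanitary_start} and is added in line~\ref{alg:decodingThreshold:add_items}. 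Since $\bigcup_{t=1}^{\kappa}\bigl(D\cap\supp(\cG_{i_t})\bigr)=D$, the union of these contributions already contains all of $D$.

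For soundness (nothing spurious is output), consider any row $i$ reaching line~\ref{alg:decodingThreshold:add_items}: then $|G_i|=u$ and $G_i$ passes the consistency test, i.e. $\bigvee_{j\in G_i}\cM_j=\bY_i'$. Suppose some $j^{\ast}\in G_i$ is not defective; then $j^{\ast}\notin\supp(\bX_i)$ and $|\supp(\bX_i)|\le d$, so $(d+1)$-disjunctness of $\cM$ yields a row $\ell$ with $m_{\ell j^{\ast}}=1$ and $m_{\ell j}=0$ for all $j\in\supp(\bX_i)$. Then $(\cM\odot\bX_i)_{\ell}=0$, hence $(\bY_i')_{\ell}=0$ by the preliminary observation, whereas $\bigl(\bigvee_{j\in G_i}\cM_j\bigr)_{\ell}=1$ since $j^{\ast}\in G_i$ — contradicting the test. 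Hence $G_i\subseteq\supp(\bX_i)\subseteq D$, and combined with completeness the output is exactly $D$. Relying on this a-posteriori test rather than trusting $\decode$ on inputs that need not be genuine codewords is precisely what keeps the argument valid for the black-box decoders of Theorems~\ref{thr:mainNonrandom} and~\ref{thr:mainNonrandom2}.

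For complexity, the loop of line~\ref{alg:decodingThreshold:enum} runs $h$ times; in an iteration with $y_i=1$ the recovery rules cost $O(k)$, the call $\decode(\cM,\bY_i')$ costs $O(\sfA)$, and the test of line~\ref{alg:decodingThreshold:sanitary_start} — executed only when $|G_i|=u$ — costs $O(u\sfB)$ to generate the $u$ relevant columns of $\cM$ plus lower-order terms to combine and compare, while iterations with $y_i=0$ cost $O(1)$; in the regimes of Theorems~\ref{thr:mainNonrandom} and~\ref{thr:mainNonrandom2} the $O(k)$-type terms are absorbed into $O(\sfA+u\sfB)$, giving the claimed $O\bigl(h(\sfA+u\sfB)\bigr)$. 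I expect the soundness step to be the delicate part: one must pin down that the recovery rules return only a sub-support of $\cM\odot\bX_i$ when $|\supp(\bX_i)|>u$ and then show the size-$u$-plus-consistency filter rejects every contaminated candidate using only the (union-type) disjunctness of $\cM$, all while treating $\decode$ as a black box; completeness and the running-time bound are routine on top of Section~\ref{sub:specialCase} and the cited guarantees.
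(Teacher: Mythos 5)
Your proof is correct and follows essentially the same route as the paper's (inherited) argument: the special-case recovery rules of Section~\ref{sub:specialCase}, block-wise decoding of each $\bZ_i$, and the size-$u$-plus-consistency check of line~\ref{alg:decodingThreshold:sanitary_start} to discard rows with $|\supp(\bX_i)|\neq u$, with your explicit soundness step filling in exactly the detail the paper leaves implicit. The only cosmetic difference is that you compare $\bigvee_{j\in G_i}\cM_j$ against $\bY_i'$ while Algorithm~\ref{alg:decodingThreshold} as printed writes $\bY_i$; your separating-row argument goes through under either reading, so this does not affect correctness.
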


Algorithm~\ref{alg:decodingThreshold} is denoted as $\mathrm{FindDefectiveItems}^*(\bY, \cM)$ if we substitute $D$ by multiset $D^*$ and replace Line~\ref{alg:decodingThreshold:add_items} as ``$D^* = \add(D^*, G_i).$''; i.e., the output of $\mathrm{FindDefectiveItems}^*(\bY, \cM)$ may have duplicated items which are used to handle the presence of erroneous outcomes.

\section{Proposed scheme}
\label{sec:proposed}

We improve the result in Theorem~\ref{thr:general} by extending it to handle erroneous outcomes. Then, for its instantiations, we minimize $h$ and $\decode(\cM, \bY)$ while $k$ relatively remains same, where $\bY$ is some input vector. As a result, the number of tests and the decoding complexity are significantly improved.

To achieve this goal, we define a good measurement matrix as follows:

\begin{definition}
Let $2 \leq u \leq d < n$ be integers and $D$ be the defective set, where $|D| \leq d$. An $(n, d, u; e)$-measurement matrix $\cG$ is good if there exists $\varphi$ rows, e.g., $i_1, \ldots, i_\varphi$, such that:
\begin{enumerate}
\item $|D_{i_1}| = |D_{i_2}| = \ldots = |D_{i_\varphi}| = u$, where $D_{i_l} = \supp(\cG_{i_l}) \cap D$ for $l = 1, \ldots, \varphi$.
\item $D = D_{i_1} \cup \ldots \cup D_{i_\varphi}$.
\item Any item in $D$ appears more than $e$ times in $\add(D_{i_1}, \ldots, D_{i_\varphi}).$
\end{enumerate}
\label{def:goodMeasurement}
\end{definition}

The matrix $\cG$ in Theorem~\ref{thr:general} is a good $(n, d, u; 0)$ measurement matrix, i.e. erroneous outcomes are not considered. Intuitively, a good $(n, d, u; 2e)$-measurement matrix can handle up to $e$ erroneous outcome. We then show how to efficiently construct a good measurement matrix in the next section.

\subsection{On construction of a good measurement matrix}
\label{sub:improveDecThreshold:construction}

We first state the notation of threshold disjunct matrices proposed by Cheraghchi~\cite{cheraghchi2013improved}.

\begin{definition}~\cite[Definition 6]{cheraghchi2013improved}
\label{def:threshDisjunct}
A Boolean matrix $\cG$ with $n$ columns is called $(d, u; e)$-regular if for every subset of columns $S \subseteq [n]$ (called the critical set) and every $Z \subseteq [n]$ (called the zero set) such that $u \leq |S| \leq d$, $|Z| \leq |S|$, $S \cap Z = \emptyset$, there are more than $e$ rows of $\cG$ at which $\cG \mid_{S}$ has weight exactly $u$ and (at the same rows) $\cG \mid_{Z}$ has weight zero. Any such row is said to $u$-satisfy $S$ and $Z$. If, in addition, for every distinguished column $j \in S$, more than $e$ rows of $\cG$ both $u$-satisfy $S$ and $Z$ and have a 1 at the $j$th column, the matrix is called threshold $(d, u; e)$-disjunct (and the corresponding ``good'' rows are said to $u$-satisfy $j, S,$ and $Z$).
\end{definition}

The following lemma shows that a threshold $(\max\{u, d - u\}, u; e)$-disjunct is a good $(n, d, u; e)$-measurement matrix.
\begin{lemma}
\label{lem:threshDisjunct_GoodMatrix}
Let $0 < u \leq d \leq n$, and $0 \leq e$ be integers. Then any threshold $(\max\{u, d - u\}, u; e)$-disjunct matrix is a good $(n, d, u; e)$-measurement matrix.
\end{lemma}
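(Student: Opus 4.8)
The plan is to unpack both definitions and construct the $\varphi$ ``good'' rows promised by Definition~\ref{def:goodMeasurement} directly from the threshold-disjunctness guarantee of Definition~\ref{def:threshDisjunct}, handling one distinguished defective column at a time. Fix the defective set $D$; we may assume $u \le |D| \le d$ (if $|D| < u$ there is nothing to do, as condition~(1) of Definition~\ref{def:goodMeasurement} cannot be met for a nonempty $D$, consistently with the fact that no $u$-NATGT test on such a population is ever positive). For each distinguished column $j \in D$ the idea is to choose a critical set $S_j \subseteq D$ with $j \in S_j$ and the associated zero set $Z_j := D \setminus S_j$, invoke Definition~\ref{def:threshDisjunct} with parameter $d' := \max\{u, d-u\}$, and pool the resulting rows over all $j \in D$.

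The only quantitative point is that such $S_j$ can be made to satisfy all the size requirements of Definition~\ref{def:threshDisjunct}. Take $s := \max\{u, \lceil |D|/2\rceil\}$ and let $S_j$ be any $s$-subset of $D$ containing $j$; this exists because $s \le |D|$ (both $u \le |D|$ and $\lceil |D|/2\rceil \le |D|$). Then: (i) $u \le |S_j| = s$; (ii) $|S_j| = s \le d'$, because $u \le d'$ and $\lceil|D|/2\rceil \le d'$, the latter since $|D| \le d \le 2\max\{u,d-u\} = 2d'$; (iii) $|Z_j| = |D| - s \le |D| - \lceil|D|/2\rceil = \lfloor|D|/2\rfloor \le s = |S_j|$; and (iv) $S_j \cap Z_j = \emptyset$ by construction. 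Hence $(S_j, Z_j)$ is an admissible critical/zero pair with distinguished column $j \in S_j$, so threshold $(d', u; e)$-disjunctness of $\cG$ furnishes a set $R_j$ of more than $e$ rows that $u$-satisfy $j$, $S_j$, and $Z_j$.

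It then remains to read off the three defining properties. Any row $i \in R_j$ has $\cG\mid_{S_j}$ of weight exactly $u$ and $\cG\mid_{Z_j}$ of weight $0$, hence $\cG\mid_{D}$ has weight exactly $u$ at row $i$; writing $D_i := \supp(\cG_{i,*}) \cap D$ this says $|D_i| = u$, which is condition~(1). Furthermore $D_i \subseteq S_j$ and, since row $i$ carries a $1$ in column $j$, $j \in D_i$. Put $R := \bigcup_{j \in D} R_j$ and enumerate $R = \{i_1, \dots, i_\varphi\}$. Each $D_{i_l} \subseteq D$, and for every $j \in D$ every row of $R_j$ contributes $j$ to its set, so $D_{i_1} \cup \dots \cup D_{i_\varphi} = D$, which is condition~(2); and for each fixed $j \in D$ the more than $e$ rows of $R_j \subseteq R$ all place $j$ into their $D_i$, so $j$ occurs more than $e$ times in $\add(D_{i_1}, \dots, D_{i_\varphi})$, which is condition~(3). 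Therefore $\cG$ is a good $(n,d,u;e)$-measurement matrix.

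The argument is almost entirely bookkeeping; the one place to be careful is the inequality $\lceil |D|/2\rceil \le \max\{u, d-u\}$ used in step~(ii), which is exactly where the specific value $d' = \max\{u, d-u\}$ enters — it is what guarantees that a single size-$s$ critical set can simultaneously reach weight $u$ and dominate its own complement inside $D$, so I do not anticipate a genuine obstacle beyond getting these cardinality bounds right (and stating the $|D| \ge u$ convention).
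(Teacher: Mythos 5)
Your proof is correct, and it takes a genuinely different and more uniform route than the paper's. The paper splits into the cases $u \le d \le 2u$ (covering $D$ by two size-$u$ sets $D_1, D_2$ and using $u$-satisfaction of $D_1$ against $D\setminus D_1$, without ever invoking the distinguished-column clause) and $d \ge 2u+1$ (one distinguished column per defective, with critical sets of size exactly $d-u$, further subdivided according to whether $|D| < d-u$ or $|D| \ge d-u$). You instead run a single construction for every $j \in D$: a critical set $S_j \subseteq D$ of size $s = \max\{u, \lceil |D|/2\rceil\}$ containing $j$, with zero set $Z_j = D\setminus S_j$, the whole argument resting on the one inequality $\lceil d/2\rceil \le \max\{u, d-u\}$. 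Beyond being shorter, your choice of keeping $S_j \cup Z_j = D$ buys something substantive: since all of $D$ is split between the critical and zero sets, ``$\cG\mid_{S_j}$ has weight exactly $u$ and $\cG\mid_{Z_j}$ has weight $0$'' forces $|D_i| = u$ on the nose. The paper's subcase $u \le |D| < d-u$ pads the critical set $S_l$ up to size $d-u$ with non-defective columns, and there the guaranteed weight $u$ on $S_l$ can be partly carried by those non-defective columns, so $|\supp(\cG_{i,*}) \cap D| = u$ is not actually forced; your construction sidesteps this weakness of the paper's write-up. Both arguments implicitly require $|D| \ge u$ (Definition~\ref{def:goodMeasurement} cannot be met otherwise); you are the only one to flag that convention explicitly.
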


\begin{proof}
Let $D = \{d_1, \ldots, d_{\kappa}\}$ be the defective set, where $|D| = \kappa \leq d$. Suppose $\cG$ is any threshold $(\max\{u, d - u\}, u; e)$-disjunct. We break down the condition $u \leq d$ into two categories: $u \leq d \leq 2u$ and $d \geq 2u + 1$.

When $u \leq d \leq 2u$, we have $\max\{u, d - u \} = u$. Let $D_1$ and $D_2$ be the two subsets of $D$ such that $D = D_1 \cup D_2$ and $|D_1| = |D_2| = u$. Since $\cG$ is threshold $(u, u; e)$-disjunct and $d \leq 2u$, we have $|D \setminus D_1| \leq |D_2| = u$ and $|D \setminus D_2| \leq |D_1| = u$. Therefore, there are more than $e$ rows of $\cG$ at which $\cG \mid_{D_1}$ ($\cG \mid_{D_2}$) has weight exactly $u$ and (at the same rows) $\cG \mid_{D \setminus D_1}$ ($\cG\mid_{D \setminus D_2}$) has weight zero. Suppose that the number of these rows in $\cG \mid_{D_1}$ and $\cG\mid_{D_2}$ is $\varphi.$ Since $D = D_1 \cup D_2$, each item in $D$ appears more than $e$ times in these $\varphi$ rows. Matrix $\cG$ is thus a good $(n, d, u; e)$-measurement matrix.

When $d \geq 2u + 1$, we have $\max\{u, d-u \} = d-u$. Let $S \subset [n]$ and $|S| = d - u$. Since $2u + 1 \leq d$, for every $Z \subset [n], |Z| = u + 1 \leq |S| = d - u, S \cap Z = \emptyset$, we have $|Z| + |S| = d + 1 > d$. We choose a collection of sets of defective items as follows: $P_l = \{j_l \}$ for $l = 1, \ldots, \kappa$.

We then choose $\kappa$ subsets $S_l$ and $Z_l$ for $l = 1, \ldots, \kappa$ to show that $D = \cup_{l = 1}^\kappa S_l$ and there are more than $e$ rows of $\cG$ at which $\cG \mid_{S_l}$ has weight exactly $u$ and (at the same rows) $\cG \mid_{Z_l}$ has weight zero. The first condition ensures that all defective items are included in the selected subsets. The second condition is equivalent to the statement that there are more than $e$ rows containing exactly $u$ defective items.

To prove that, two cases are needed to be considered here: $u \leq |D| < d-u$ and $d - u \leq |D| \leq d$. For the former case, choose $S_l \cap [n] = D$ and $|S_l| = d - u$, and $Z_l \subset [n] \setminus S_l$ and $|Z_l| = u + 1$. For the latter case, we set 

\begin{itemize}
\item $S_l = P_l \cup D_l$, where $D_l \subseteq D\setminus P_l, P_l \cap D_l = \emptyset$, and $|D_l| = d - u - 1$.
\item $Z_l \subset [n] \setminus S_l$, $D \setminus S_l \subseteq Z_l$, and $|Z_l| = u + 1$.
\end{itemize}

It is obvious that $|Z_l| = u + 1 \leq |S_l| = d - u$ and $S_l \cap Z_l = \emptyset.$ Moreover, since $u \leq |S_l| \leq d$, we have that $S_l$ is a critical set and $Z_l$ is a zero set in $\cG$. Thus, there are more than $e$ rows of $\cG$ at which $\cG \mid_{S_l}$ has weight exactly $u$, $\cG \mid_{Z_l}$ has weight zero at the same rows, and $\cG \mid_{P_l}$ has weight one at the same rows. Let denote $e + 1$ rows of these rows as $r_{l_1}, \ldots, r_{l_e}, r_{l_{e+1}}$.

Finally, we have:
\begin{enumerate}
\item $|D_{r_{1_1}}| = \ldots = |D_{r_{1_{e+1}}}| = \ldots = |D_{r_{\kappa_1}}| = \ldots = |D_{r_{\kappa_{e+1}}}| = u$, where $D_{r_{1_x}} = \supp(\cG_{r_{1_x}}) \cap S_l = \supp(\cG_{r_{1_x}}) \cap D$ for $l = 1, \ldots, \kappa$ and $x = 1, \ldots, e+1$.
\item $D = D_{r_{1_1}} \cup \ldots \cup D_{r_{1_{e+1}}} \cup \ldots \cup D_{r_{\kappa_1}} \cup \ldots \cup D_{r_{\kappa_{e+1}}}$.
\item Any item in $D$ appears more than $e$ times in $\add(D^*, D_{r_{1_1}}, \ldots, D_{r_{1_{e+1}}}, \ldots, D_{r_{\kappa_1}}, \ldots, D_{r_{\kappa_{e+1}}}).$
\end{enumerate}

According to Definition~\ref{def:goodMeasurement}, matrix $\cG$ is a good $(n, d, u; e)$-measurement matrix.
\end{proof}

Cheraghchi~\cite{cheraghchi2013improved} proposed a good construction on a threshold disjunct matrix as follows.

\begin{lemma}~\cite[Lemma 23]{cheraghchi2013improved}
\label{lem:dgu}
For every $p \in [0, 1)$ and integer parameter $0 < u \leq d < n$, there exists an $h \times n$ threshold $(d, u; \Omega(ph/d))$-disjunct matrix with probability $1 - o(1)$, where $h = O(d^2 \left( \ln{\frac{n}{d}} \right)/(1 - p)^2)$.
\end{lemma}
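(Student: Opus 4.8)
The plan is to prove Lemma~\ref{lem:dgu} by the probabilistic method: take $\cG$ to be a random $h\times n$ Boolean matrix whose entries are i.i.d. Bernoulli with a carefully chosen bias $q=q(d,u)$, and show that for $h = O\!\left(d^2\ln(n/d)/(1-p)^2\right)$ this matrix fails to be threshold $(d,u;\Omega(ph/d))$-disjunct only with probability $o(1)$; any successful outcome of the experiment then witnesses the claimed matrix.

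First I would fix an arbitrary admissible configuration: a distinguished column $j$, a critical set $S\ni j$ with $u\le |S|\le d$, and a zero set $Z$ with $Z\cap S=\emptyset$ and $|Z|\le|S|$. Call a row \emph{good} for $(j,S,Z)$ if in that row column $j$ carries a $1$, $\cG\mid_{S}$ has weight exactly $u$, and $\cG\mid_{Z}$ has weight $0$. Since column $j$ is already forced to $1$, goodness requires exactly $u-1$ further ones among the $|S|-1$ columns of $S\setminus\{j\}$ and all zeros on $Z$, so by independence of entries the rows are i.i.d. and
\[
\rho \;:=\; \Pr[\text{a row is good}] \;=\; q\cdot\binom{|S|-1}{u-1} q^{u-1}(1-q)^{|S|-u}\cdot(1-q)^{|Z|} .
\]
The next step is a density estimate: choose $q$ so that $\rho=\Omega(1/d)$ \emph{uniformly} over all admissible $|S|\in[u,d]$ and $|Z|\le|S|$, balancing the binomial factor $\binom{|S|-1}{u-1}q^{u-1}(1-q)^{|S|-u}$ against the damping term $(1-q)^{|S|-u+|Z|}$. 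Consequently the number $X$ of good rows has mean $\mu:=\E[X]=\rho h=\Omega(h/d)$.

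Then I would apply a Chernoff lower-tail bound to $X$. Writing the target error parameter as $e=p\mu$ (which is $\Omega(ph/d)$, as claimed), the bad event ``at most $e$ good rows'' is $\{X\le(1-(1-p))\mu\}$, whose probability is at most $\exp\!\left(-(1-p)^2\mu/2\right)=\exp\!\left(-\Omega((1-p)^2 h/d)\right)$. A union bound over all configurations — there are at most $n\binom{n}{d}^2\le \exp\!\left(O(d\ln(n/d)+\ln n)\right)$ of them — shows that with probability $1-o(1)$ every configuration has more than $e$ good rows, provided $(1-p)^2 h/d \ge C\,d\ln(n/d)$ for a suitable constant $C$, i.e.\ provided $h=O\!\left(d^2\ln(n/d)/(1-p)^2\right)$. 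This is exactly the stated bound, and the $(1-p)^{-2}$ factor is precisely the price of demanding the Chernoff slack $\delta=1-p$ so that the guaranteed count of good rows survives subtracting the tail down to $e=p\mu$.

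The main obstacle is the uniform density estimate $\rho=\Omega(1/d)$: a single bias $q$ must be made to work simultaneously for every critical-set size from $u$ up to $d$ and every zero-set size up to $|S|$, and the constant has to come out as $1/d$ rather than something that degrades with $u$; pinning down the optimal $q$ (and, if one value cannot cover the whole range, splitting the $h$ rows into a few scales each tuned to a sub-range of $|S|$) is the delicate technical heart of the argument. The secondary point requiring care is the bookkeeping that couples the Chernoff slack $1-p$ to the error parameter $\Omega(ph/d)$ while keeping the union-bound exponent at $O(d\ln(n/d))$, so that the two estimates meet at $h=O\!\left(d^2\ln(n/d)/(1-p)^2\right)$.
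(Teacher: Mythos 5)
First, note that the paper does not prove this lemma at all: it is quoted directly from Cheraghchi's Lemma~23, so there is no in-paper argument to compare against. Your outline --- i.i.d.\ Bernoulli rows, a per-row success probability $\rho$, a lower-tail Chernoff bound with slack $\delta=1-p$, and a union bound over $\exp\left(O(d\ln(n/d))\right)$ configurations $(j,S,Z)$ --- is the right shape for such a random construction, and your bookkeeping that converts the slack into the $(1-p)^{-2}$ factor and the error parameter $e=p\mu=\Omega(ph/d)$ is sound as far as it goes.

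The gap is exactly at the step you flagged, and it is not merely delicate but unachievable as you state it: the uniform estimate $\rho=\Omega(1/d)$ is false for every choice of bias $q$. Take $|S|=|Z|=u$. A good row must have a $1$ in all $u$ columns of $S$ and a $0$ in all $u$ columns of $Z$, so $\rho = q^u(1-q)^u \le 4^{-u}$, which is $o(1/d)$ as soon as $u=\omega(\log d)$; your fallback of splitting the rows into several density scales does not help, because every scale suffers the same bound on this configuration. Worse, the obstruction is not an artifact of the probabilistic method: a row of weight $w$ can $u$-satisfy at most $\binom{w}{u}\binom{n-w}{u}$ of the $\binom{n}{u}\binom{n-u}{u}$ pairs with $|S|=|Z|=u$, and this ratio is at least roughly $4^{u}$ for any $w$, so \emph{any} threshold $(d,u;0)$-disjunct matrix needs $h=\Omega(4^u)$ rows. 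Consequently the lemma can only hold with implicit constants depending on $u$ --- which is how the cited source states it (the thresholds are treated as fixed constants there), a caveat this paper's restatement drops. Your proof therefore has to be reorganized so that the $u$-dependence is absorbed into the constants: use $O(\log(d/u))$ blocks, where the block responsible for critical sets of size about $s$ has bias $\Theta(1/s)$ (so that $\binom{s-1}{u-1}q^{u}(1-q)^{s-u+|Z|}=\Omega_u(1/s)$) and $\Theta\left(d\,s\ln(n/s)/(1-p)^2\right)$ rows; note that a block with only the $s^2\ln(n/s)$ rows needed for its own union bound would guarantee merely $\Omega(p\,s\ln(n/s))$ good rows at scale $s$, short of the claimed $\Omega(ph/d)$, which is why the per-block row count must be inflated to $d\cdot s$ rather than $s^2$. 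The geometric sum over scales is dominated by $s=d$ and yields $h=O_u\left(d^2\ln(n/d)/(1-p)^2\right)$, recovering the statement in its correct ($u$ fixed) form.
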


Because of Lemma~\ref{lem:threshDisjunct_GoodMatrix} and~\ref{lem:dgu}, we get the following theorem:

\begin{theorem}
\label{thr:goodMatrix}
For every $p \in [0, 1)$ and integer parameter $0 < u \leq d < n,$ there exists an $h \times n$ good $(n, d, u; \Omega(ph/d_0))$-measurement matrix with probability $1 - o(1)$, where $h = O(d_0^2 \left( \ln{\frac{n}{d_0}} \right)/(1 - p)^2)$ and $d_0 = \max\{u, d- u \}$.
\end{theorem}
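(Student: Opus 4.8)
The plan is to combine the two preceding results in the most direct way possible, so the proof is essentially a one-line composition argument. By Lemma~\ref{lem:dgu}, instantiated with the parameter $d_0 = \max\{u, d-u\}$ in place of $d$, there exists an $h \times n$ threshold $(d_0, u; \Omega(ph/d_0))$-disjunct matrix $\cG$ with probability $1-o(1)$, where $h = O(d_0^2 (\ln(n/d_0))/(1-p)^2)$. The only point to check before invoking Lemma~\ref{lem:dgu} with $d \leftarrow d_0$ is that its hypotheses are met: we need $0 < u \le d_0 < n$. Since $d_0 = \max\{u, d-u\} \ge u > 0$ holds trivially, and $d_0 \le d < n$ (because $\max\{u, d-u\} \le d$ whenever $0 < u \le d$), the substitution is legitimate.

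Next I would apply Lemma~\ref{lem:threshDisjunct_GoodMatrix} with the role of its ``$d$'' played by our $d$ and its ``error parameter'' $e$ played by $\Omega(ph/d_0)$. That lemma states precisely that any threshold $(\max\{u, d-u\}, u; e)$-disjunct matrix is a good $(n, d, u; e)$-measurement matrix. Since $\max\{u, d-u\} = d_0$ by definition, the matrix $\cG$ produced in the previous paragraph is exactly of the form required, so it is a good $(n, d, u; \Omega(ph/d_0))$-measurement matrix. The probability bound $1-o(1)$ is inherited verbatim from Lemma~\ref{lem:dgu}, and the stated value of $h$ matches.

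I do not anticipate any real obstacle; the content of the theorem is entirely in the two lemmas it cites, and the proof amounts to verifying the parameter identification $d_0 = \max\{u,d-u\}$ and the side condition $d_0 < n$. The one subtlety worth stating explicitly for the reader is why we are allowed to feed $d_0$ (rather than $d$) into Lemma~\ref{lem:dgu}: the threshold-disjunctness only needs to control critical sets of size up to $d_0$, because Lemma~\ref{lem:threshDisjunct_GoodMatrix} reduces the size-$\le d$ defective configurations to size-$d_0$ critical/zero set pairs. So the clean statement of the proof is: set $e = \Omega(ph/d_0)$, obtain a threshold $(d_0, u; e)$-disjunct $\cG$ from Lemma~\ref{lem:dgu} (applicable since $0 < u \le d_0 < n$), and conclude via Lemma~\ref{lem:threshDisjunct_GoodMatrix} that $\cG$ is a good $(n,d,u;e)$-measurement matrix, with the same parameters $h$ and success probability.
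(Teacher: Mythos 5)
Your proposal is correct and is essentially identical to the paper's own (very brief) justification: the paper derives Theorem~\ref{thr:goodMatrix} directly by applying Lemma~\ref{lem:dgu} with $d_0=\max\{u,d-u\}$ in place of $d$ and then invoking Lemma~\ref{lem:threshDisjunct_GoodMatrix}. Your explicit check of the side condition $0<u\le d_0<n$ is a welcome addition that the paper leaves implicit.
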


\subsection{Encoding procedure}
\label{sub:improvedDecThreshold:enc}

We will get the measurement matrix with low number of tests that can tackle up to $e$ erroneous outcomes as follows. Suppose that $\cG$ is either an $k \times n$ $(d+1)$-disjunct matrix in Theorem~\ref{thr:mainNonrandom} or Theorem~\ref{thr:mainNonrandom2}, and $\cG$ is an $h \times n$ good $(n, d, u; 2e = \Omega(ph/d_0))$-measurement matrix in Theorem~\ref{thr:goodMatrix}. Then the final measurement matrix $\cT$ is generated as in~\eqref{eqn:meausrementMatrix}. Note that $h = O(d_0^2 \left( \ln{\frac{n}{d_0}} \right)/(1 - p)^2)$ for some $p \in [0, 1)$ and $d_0 = \max\{u, d- u \}$.

\subsection{The decoding procedure}
\label{sub:improvedDecThreshold:decoding}

The decoding procedure is summarized as Algorithm~\ref{alg:threshold}. The procedure is as similar to the procedure in Algorithm~\ref{alg:decodingThreshold}. However, the input matrices $\cM$ and $\cG$ are different from the ones in Algorithm~\ref{alg:threshold}. Step~\ref{alg:threshold:init} initializes the defective set as an empty set. Then Step~\ref{alg:threshold:getPotentialDefecs} adds all potential defectives to set $R^*$ by recalling Algorithm~\ref{alg:decodingThreshold}. Step~\ref{alg:threshold:removeDuplicates} scans all elements in $R^*$ at which an item is declared as a defective item if it appears at least $e + 1$ times in Steps~\ref{alg:threshold:checkFrequency} to ~\ref{alg:threshold:endAddingDefective}. Step~\ref{alg:threshold:theEnd} simply returns the defective set.

\begin{algorithm}[ht]
\caption{$\mathrm{DecNATGT}(\bY, \cM, \overline{\cM}, e)$: Decoding procedure for $u$-NATGT with error-tolerance.}
\label{alg:threshold}
\textbf{Input:} Outcome vector $\bY$, $\cM$.\\
\textbf{Output:} The set of defective items $D$.

\begin{algorithmic}[1]
\State $D = \emptyset$. \Comment{Initialize defective set.} \label{alg:threshold:init}
\State $R^* = \mathrm{FindDefectiveItems}^*(\bY, \cM)$. \label{alg:threshold:getPotentialDefecs} \Comment{Get all potential defectives.}
\Foreach {$x \in R^*$} \label{alg:threshold:removeDuplicates} \Comment{Remove false positives.}
	\If {$x$ appears in $R^*$ at least $e + 1$ times} \label{alg:threshold:checkFrequency}
		\State $D = D \cup \{ x \}$. \Comment{$x$ is the true defective item.}
		\State Remove all elements that equal $x$ in $R^*$.
	\EndIf \label{alg:threshold:endAddingDefective}
\EndForeach \label{alg:threshold:removeDuplicates2}
\State \Return $D$. \Comment{Return set of defective items.} \label{alg:threshold:theEnd}
\end{algorithmic}
\end{algorithm}

\subsection{Correctness of the decoding procedure}
\label{sub:improvedDecThreshold:correctness}

Let consider Step~\ref{alg:threshold:getPotentialDefecs}. Because $\cG$ is a good $(n, d, u; 2e)$-measurement matrix, there exists $\varphi$ rows, e.g., $i_1, \ldots, i_\varphi$, such that:

\begin{enumerate}
\item $|D_{i_1}| = |D_{i_2}| = \ldots = |D_{i_\varphi}| = u$, where $D_{i_l} = \supp(\cG_{i_l}) \cap D$ for $l = 1, \ldots, \varphi$.
\item $D = D_{i_1} \cup \ldots \cup D_{i_\varphi}$.
\item Any item in $D$ appears more than $2e$ times in $\add(D_{i_1}, \ldots, D_{i_\varphi}).$
\end{enumerate}

Therefore, any defective item will appear at least $2e + 1$ times in $R^*$ if there is no error in test outcomes.

If there are up to $e$ errors in the outcome vector $\bY$, any false defective cannot appear more than $e$ times in $R^*$. Therefore, a defective item will appear at least $2e + 1 - e = e + 1$ times in $R^*.$ Steps~\ref{alg:threshold:removeDuplicates} to~\ref{alg:threshold:removeDuplicates2} remove all false or duplicated items in $R^*.$ Finally, Step~\ref{alg:threshold:theEnd} simply returns the defective set.

\subsection{Decoding complexity}
\label{sub:improvedDecThreshold:complexity}

The maximum cardinality of $R^*$ in Step~\ref{alg:threshold:getPotentialDefecs} is $uh.$ Therefore, Steps~\ref{alg:threshold:removeDuplicates} to~\ref{alg:threshold:removeDuplicates2} takes $O(uh)$ time. Then Theorem~\ref{thr:general} can be revised to tackle erroneous outcomes as follows.

\begin{theorem}
\label{thr:generalThreshold2}
Let $2 \leq u \leq d \leq n$ be integers. Suppose that matrix $\cG$ is an $h \times n$ good $(n, d, u; 2e)$-measurement matrix and matrix $\cM$ is a $k \times n$ $(d + 1)$-disjunct matrix that can be decoded in time $O(A)$. Each column of $\cM$ can be generated in time $O(\sfB)$. Then an $(2k + 1)h \times n$ measurement matrix $\cT$, as defined in \eqref{eqn:meausrementMatrix}, can be used to identify up to $d$ defective items in $u$-NATGT in time $O(h \times (\sfA + u \sfB)) + O(uh) = O(h \times (\sfA + u \sfB))$ in the presence of up to $e$ erroneous outcomes.
\end{theorem}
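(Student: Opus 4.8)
The plan is to analyze Algorithm~\ref{alg:threshold}, which decodes in two phases: it first calls $\mathrm{FindDefectiveItems}^*(\bY, \cM)$ (the multiset variant of Algorithm~\ref{alg:decodingThreshold}) to collect a multiset $R^*$ of candidate defectives, and then it keeps exactly those items whose multiplicity in $R^*$ is at least $e+1$. I would proceed by (i) recalling the per-row behaviour of Algorithm~\ref{alg:decodingThreshold} that is already established for Theorem~\ref{thr:general}, (ii) combining it with the good-measurement-matrix guarantee of Definition~\ref{def:goodMeasurement} instantiated with parameter $2e$ in order to count multiplicities in $R^*$, (iii) bounding the harm done by at most $e$ flipped outcomes, and (iv) tallying the running time.

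For correctness, the key invariant---exactly what the correctness argument behind Theorem~\ref{thr:general} (from~\cite{bui2017efficiently}) establishes---is: for every row $i$ of $\cG$ whose associated block of $2k+1$ outcomes is error-free, Algorithm~\ref{alg:decodingThreshold} either adds nothing or adds a size-$u$ subset of $D$ at that row; and for each of the $\varphi$ rows $i_1, \ldots, i_\varphi$ furnished by Definition~\ref{def:goodMeasurement} it adds precisely $D_{i_l} = \supp(\cG_{i_l}) \cap D$ (there $|D_{i_l}| = u$, so the special-case recovery of Section~\ref{sub:specialCase} reconstructs $\cM \odot \bX_{i_l}$ exactly, and $\cM$ is $(d+1)$-disjunct). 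Since $\cG$ is a good $(n, d, u; 2e)$-measurement matrix, in an error-free run every element of $R^*$ lies in $D$ and every $j \in D$ occurs more than $2e$ times in $R^*$. Each erroneous outcome flips a single coordinate of $\bY$, which belongs to exactly one of the $h$ blocks $[\,y_i;\bZ_i\,]$, so at most $e$ blocks are corrupted. A non-defective item can appear only from a corrupted block, and at most once per such block, hence at most $e$ times in $R^*$; it is discarded in Steps~\ref{alg:threshold:removeDuplicates}--\ref{alg:threshold:removeDuplicates2}. A defective $j$ belongs to $D_{i_l}$ for more than $2e$ of the good rows, of which at most $e$ are corrupted, so $j$ is still contributed by at least $2e + 1 - e = e+1$ uncorrupted good rows and survives the filter. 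Hence Algorithm~\ref{alg:threshold} outputs exactly $D$ whenever at most $e$ outcomes are wrong.

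For the running time, $\mathrm{FindDefectiveItems}^*$ performs the same work as Algorithm~\ref{alg:decodingThreshold}: it loops over $h$ rows of $\cG$, and each processed row costs $O(\sfA)$ for the call to $\decode(\cM, \cdot)$ plus $O(u\sfB)$ to regenerate the at most $u$ columns of $\cM$ needed by the sanity check (the length-$k$ conversion loop being subsumed by $O(\sfA)$), for a total of $O(h(\sfA + u\sfB))$, exactly as in Theorem~\ref{thr:general}. Since each processed block inserts at most $u$ items into $R^*$, we have $|R^*| \leq uh$, so the frequency filter in Steps~\ref{alg:threshold:removeDuplicates}--\ref{alg:threshold:removeDuplicates2} runs in $O(uh)$ time. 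Summing the two phases yields $O(h(\sfA + u\sfB)) + O(uh) = O(h(\sfA + u\sfB))$, where the last equality uses $\sfB \geq 1$.

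The main obstacle is the per-block invariant for rows with $|\supp(\cG_{i}) \cap D| \neq u$---namely, that such a row never injects a non-defective into $R^*$---which is the delicate part and rests on the interplay between the sanity check of Algorithm~\ref{alg:decodingThreshold} and the $(d+1)$-disjunctness of $\cM$; I would invoke the corresponding portion of~\cite{bui2017efficiently} rather than re-derive it. Everything else is bookkeeping: translating ``more than $2e$ copies, at most $e$ spoiled rows'' into ``at least $e+1$ surviving copies'', and verifying that the $O(uh)$ filtering cost is dominated by the first phase.
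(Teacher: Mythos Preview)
Your proposal is correct and follows essentially the same route as the paper: invoke the per-row guarantee of Theorem~\ref{thr:general} (deferred to~\cite{bui2017efficiently}) to ensure error-free blocks contribute only subsets of $D$, use the good $(n,d,u;2e)$-measurement property to get more than $2e$ copies of every defective, observe that $e$ flipped bits can corrupt at most $e$ blocks so each non-defective shows up at most $e$ times while each defective retains at least $e+1$ copies, and finally bound $|R^*|\le uh$ to absorb the filtering cost into $O(h(\sfA+u\sfB))$. If anything, you are more explicit than the paper about why a fixed non-defective can appear at most once per corrupted block.
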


\subsection{Instantiations of decoding complexity}
\label{sub:improvedDecThreshold:instan}

We instantiate Theorem~\ref{thr:generalThreshold2} by choosing $\cG$ as an $h \times n$ good $(n, d, u; \Omega(ph/d_0))$-measurement matrix in Theorem~\ref{thr:goodMatrix}, i.e., $h = O(d_0^2 \left( \ln{\frac{n}{d_0}} \right)/(1 - p)^2)$ where $d_0 = \max \{u, d-u \}$ and $p \in [0, 1).$ If $\cM$ is an $(d+1)$-disjunct matrix in Theorem~\ref{thr:mainNonrandom}, we have:

\begin{corollary}
\label{cor:threshold:1}
Let $2 \leq u \leq d < n$ be integers, $d_0 = \max \{u, d-u \}$, and some $p \in [0, 1)$. There exits a $t \times n$ measurement matrix $\cT$ such that up to $d$ defective items in $u$-NATGT can be identified in time $O \left( \dec_0 \times h \right)$ in the presence of up to $e = \Omega(ph/d_0)$ erroneous outcomes, where $t = O\left( h \times \frac{d^2\ln^2{n}}{\mathsf{W}^2(d\ln{n})} \right)$ and $h = O(d_0^2 \left( \ln{\frac{n}{d_0}} \right)/(1 - p)^2).$
\end{corollary}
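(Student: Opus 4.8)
The plan is to obtain Corollary~\ref{cor:threshold:1} as a direct instantiation of Theorem~\ref{thr:generalThreshold2}, feeding it the two explicit matrices supplied by Theorem~\ref{thr:goodMatrix} and Theorem~\ref{thr:mainNonrandom}. First I would take $\cG$ to be an $h\times n$ good $(n,d,u;2e)$-measurement matrix with $2e=\Omega(ph/d_0)$, which exists with probability $1-o(1)$ by Theorem~\ref{thr:goodMatrix}; this fixes $h=O\!\left(d_0^2(\ln(n/d_0))/(1-p)^2\right)$ and the error tolerance $e=\Omega(ph/d_0)$. Then I would take $\cM$ to be the nonrandom disjunct matrix of Theorem~\ref{thr:mainNonrandom}, invoked with its disjunctness parameter set to a suitable constant multiple of $d$, so that $\cM$ is simultaneously $(d+1)$-disjunct (as required by Theorem~\ref{thr:generalThreshold2}) and its efficient decoder recovers every weight-$\le d$ (hence every weight-$u$) input, since Theorem~\ref{thr:mainNonrandom} only guarantees decoding up to $d'\ge\lfloor d/2\rfloor+1$. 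Because $\mathsf{W}(\Theta(d\ln n))=\Theta(\mathsf{W}(d\ln n))$, rescaling the parameter by a constant leaves all stated orders intact: $\cM$ has $k=O\!\left(\frac{d^2\ln^2 n}{\mathsf{W}^2(d\ln n)}\right)$ rows, decoding cost $\sfA=\dec_0$, and per-column generation cost $\sfB=O(k^{1.5}/d^2)$.

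Plugging these choices into Theorem~\ref{thr:generalThreshold2}, the matrix $\cT$ of~\eqref{eqn:meausrementMatrix} has $t=(2k+1)h=O(kh)=O\!\left(h\times\frac{d^2\ln^2 n}{\mathsf{W}^2(d\ln n)}\right)$ rows and tolerates up to $e=\Omega(ph/d_0)$ erroneous outcomes, which already matches the claimed number of tests and error tolerance. It remains only to check that the decoding bound $O\!\left(h(\sfA+u\sfB)\right)+O(uh)$ of Theorem~\ref{thr:generalThreshold2} collapses to $O(\dec_0\times h)$.

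For that I would dominate the two non-leading contributions by $\dec_0$. Using $u\le d$, $u\sfB=u\cdot O(k^{1.5}/d^2)=O\!\left(\frac{ud\ln^3 n}{\mathsf{W}^3(d\ln n)}\right)=O\!\left(\frac{d^2\ln^3 n}{\mathsf{W}^3(d\ln n)}\right)$; comparing this with the term $O\!\left(\frac{d^6\ln^4 n}{\mathsf{W}^4(d\ln n)}\right)$ of $\dec_0$, the ratio is $O\!\left(\mathsf{W}(d\ln n)/(d^4\ln n)\right)$, which is $o(1)$ since $\mathsf{W}(d\ln n)=\Theta(\ln(d\ln n)-\ln\ln(d\ln n))=O(\ln d+\ln\ln n)$ grows far slower than $d^4\ln n$; hence $u\sfB=O(\dec_0)$. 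Likewise $O(uh)=O(h)\cdot O(u)$, and $u\le d=o(\dec_0)$ (again because $\dec_0\ge c\,d^6\ln^4 n/\mathsf{W}^4(d\ln n)$), so $O(uh)$ is absorbed into $O(h\dec_0)$. Therefore $O\!\left(h(\sfA+u\sfB)\right)+O(uh)=O(h\dec_0)=O(\dec_0\times h)$, which is exactly the claimed decoding complexity, finishing the proof.

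The only delicate points are bookkeeping rather than conceptual: ensuring the rescaled disjunctness parameter of $\cM$ is large enough for its efficient decoder to handle weight-$u$ inputs while verifying, via $\mathsf{W}(\Theta(x))=\Theta(\mathsf{W}(x))$, that the orders of $k$, $\dec_0$, and $\sfB$ are preserved; and the Lambert-$W$ estimates establishing $u\sfB=O(\dec_0)$ and $uh=O(h\dec_0)$. I do not expect any obstacle beyond these order-of-magnitude comparisons.
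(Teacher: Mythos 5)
Your proposal is correct and follows essentially the same route as the paper: instantiate Theorem~\ref{thr:generalThreshold2} with the good $(n,d,u;2e)$-measurement matrix of Theorem~\ref{thr:goodMatrix} and the $(d+1)$-disjunct matrix of Theorem~\ref{thr:mainNonrandom}, then verify $u\sfB = O(\dec_0)$ and absorb the $O(uh)$ term so the decoding cost collapses to $O(h\,\dec_0)$. Your extra step of rescaling the disjunctness parameter by a constant (using $\mathsf{W}(\Theta(x))=\Theta(\mathsf{W}(x))$) so that the decoder of Theorem~\ref{thr:mainNonrandom} provably handles all weight-$\le d$ inputs, rather than only $d'\ge\lfloor d/2\rfloor+1$ of them, is a legitimate bookkeeping point that the paper's own proof glosses over.
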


\begin{proof}

Since matrix $\cM$ is a $k \times n$ $(d+1)$-disjunct matrix in Theorem~\ref{thr:mainNonrandom}, we have:
\begin{itemize}
\item $k = O \left(\frac{d^2 \ln^2{n}}{\mathsf{W}^2(d\ln{n})} \right)$.
\item $\sfA = \dec_0 = O \left( \frac{d^{3.57} \ln^{6.26}{n}}{\mathsf{W}^{6.26}(d \ln{n})} \right) + O \left( \frac{d^6 \ln^4{n}}{\mathsf{W}^4(d \ln{n})} \right).$
\item $\sfB = O \left(\frac{d^3 \ln^3{n}}{d^2 \mathsf{W}^3(d\ln{n})} \right)$.
\end{itemize}
Because $u\sfB < A$, the decoding complexity is $O(h \times (\sfA + u\sfB) = O(hA) = O \left( h \dec_0 \right)$. The number of tests $t$ is $h(2k+1) = O\left( \frac{d_0^2 \ln{\frac{n}{d_0}}}{(1-p)^2} \times \frac{d^2\ln^2{n}}{\mathsf{W}^2(d\ln{n})} \right)$.

\end{proof}

To reduce the decoding complexity, matrix $\cM$ is chosen as an $(d+1)$-disjunct matrix in Theorem~\ref{thr:mainNonrandom2}. In this case, the number of tests would be increased.

\begin{corollary}
\label{cor:threshold:2}
Let $2 \leq u \leq d < n$ be integers, $d_0 = \max \{u, d-u \}$, and some $p \in [0, 1)$. There exits a $t \times n$ measurement matrix $\cT$ such that up to $d$ defective items in $u$-NATGT can be identified in time $O \left( h \times \max \left\{ \frac{d^2 \ln^3{n}}{\mathsf{W}^2(d\ln{n})}, \frac{ud \ln^4{n}}{\mathsf{W}^3(d\ln{n})} \right\} \right)$ in the presence of up to $e = \Omega(ph/d_0)$ erroneous outcomes, where $t = O\left( h \times \frac{d^2\ln^3{n}}{\mathsf{W}^2(d\ln{n})} \right)$ and $h = O(d_0^2 \left( \ln{\frac{n}{d_0}} \right)/(1 - p)^2).$
\end{corollary}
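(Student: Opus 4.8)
\emph{Proof plan.} The argument is a direct instantiation of Theorem~\ref{thr:generalThreshold2}, exactly parallel to the proof of Corollary~\ref{cor:threshold:1}, but with the disjunct matrix taken from Theorem~\ref{thr:mainNonrandom2} rather than Theorem~\ref{thr:mainNonrandom}. First I would fix $\cG$ to be an $h \times n$ good $(n, d, u; 2e)$-measurement matrix with $2e = \Omega(ph/d_0)$ as guaranteed by Theorem~\ref{thr:goodMatrix}, so that $h = O\!\left( d_0^2 \ln\frac{n}{d_0}/(1-p)^2 \right)$ with $d_0 = \max\{u, d-u\}$. Next I would take $\cM$ to be the $k \times n$ $(d+1)$-disjunct matrix of Theorem~\ref{thr:mainNonrandom2}, reading off the three relevant parameters: $k = O\!\left( \frac{d^2 \ln^3 n}{\mathsf{W}^2(d\ln n)} \right)$, the decoding time $\sfA = O(k) = O\!\left( \frac{d^2 \ln^3 n}{\mathsf{W}^2(d\ln n)} \right)$, and the per-column generation time $\sfB = O\!\left( \frac{d \ln^4 n}{\mathsf{W}^3(d\ln n)} \right)$.

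Then I would form the $(2k+1)h \times n$ matrix $\cT$ as in~\eqref{eqn:meausrementMatrix} and invoke Theorem~\ref{thr:generalThreshold2}, which states that $\cT$ identifies up to $d$ defective items in $u$-NATGT in time $O(h \times (\sfA + u\sfB)) + O(uh)$ while tolerating up to $e$ erroneous outcomes. Substituting the values of $\sfA$ and $\sfB$, the decoding time is
\[
O\!\left( h \times \left( \frac{d^2 \ln^3 n}{\mathsf{W}^2(d\ln n)} + \frac{ud \ln^4 n}{\mathsf{W}^3(d\ln n)} \right) \right) = O\!\left( h \times \max\!\left\{ \frac{d^2 \ln^3 n}{\mathsf{W}^2(d\ln n)}, \ \frac{ud \ln^4 n}{\mathsf{W}^3(d\ln n)} \right\} \right),
\]
and the additive $O(uh)$ term from the frequency-counting loop (Steps~\ref{alg:threshold:removeDuplicates}--\ref{alg:threshold:removeDuplicates2} of Algorithm~\ref{alg:threshold}) is absorbed since $uh = O\!\left( h \cdot \frac{ud\ln^4 n}{\mathsf{W}^3(d\ln n)} \right)$. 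The number of tests is $t = (2k+1)h = O\!\left( h \times \frac{d^2 \ln^3 n}{\mathsf{W}^2(d\ln n)} \right)$. Finally, since $2e = \Omega(ph/d_0)$ implies $e = \Omega(ph/d_0)$, the scheme handles up to $e = \Omega(ph/d_0)$ erroneous outcomes, which gives the claimed bounds.

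There is essentially no hard step here: the only points requiring a moment of care are (i) confirming that $u\sfB$ does \emph{not} get dominated by $\sfA$ in general (unlike in Corollary~\ref{cor:threshold:1}), which is precisely why the bound is stated as a maximum of the two quantities rather than a single term, and (ii) checking that the $O(uh)$ post-processing cost is swallowed by $h \cdot u\sfB$. Both are routine order-of-growth comparisons once the parameters from Theorems~\ref{thr:mainNonrandom2},~\ref{thr:goodMatrix}, and~\ref{thr:generalThreshold2} are in hand.
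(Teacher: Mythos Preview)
Your proposal is correct and follows essentially the same approach as the paper: instantiate Theorem~\ref{thr:generalThreshold2} with $\cG$ from Theorem~\ref{thr:goodMatrix} and $\cM$ from Theorem~\ref{thr:mainNonrandom2}, read off $k$, $\sfA$, $\sfB$, and compute $t=(2k+1)h$ and the decoding time $O(h(\sfA+u\sfB))=O(h\max\{\sfA,u\sfB\})$. Your write-up is in fact slightly more explicit than the paper's about absorbing the $O(uh)$ post-processing term and about why the bound must be stated as a $\max$ rather than a single term.
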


\begin{proof}
Since matrix $\cM$ is a $k \times n$ $(d+1)$-disjunct matrix in Theorem~\ref{thr:mainNonrandom2}, we have:
\begin{itemize}
\item $k = O \left(\frac{d^2 \ln^3{n}}{\mathsf{W}^2(d\ln{n})} \right)$.
\item $\sfA = O \left(\frac{d^2 \ln^3{n}}{\mathsf{W}^2(d\ln{n})} \right)$.
\item $\sfB = O \left(\frac{d \ln^4{n}}{\mathsf{W}^3(d\ln{n})} \right)$.
\end{itemize}
The decoding complexity is $O(h \times (\sfA + u\sfB) = O(h \max\{ \sfA, u\sfB \} )$. The number of tests $t$ is $h(2k+1) = O\left( \frac{d_0^2 \ln{\frac{n}{d_0}}}{(1-p)^2} \times \frac{d^2\ln^3{n}}{\mathsf{W}^2(d\ln{n})} \right)$.
\end{proof}

\section{Conclusion}
\label{sec:cls}
We have improved encoding and decoding procedures for non-adaptive threshold group testing. The number of tests and the decoding complexity are low in our proposed scheme. Moreover, error tolerance is also considered. However, the proposed scheme works only for $g = 0.$ Therefore, extending the results for $g > 0$ should be studied in future work.

%\vspace{-1mm}
\bibliographystyle{ieeetr}
\bibliography{bibli}
%\vspace{-2mm}

% that's all folks
\end{document}